p

\documentclass{sig-alternate}

\usepackage{algorithm}
\usepackage{algorithmic}
\usepackage{graphicx}
\usepackage{epstopdf}
\usepackage[hang]{subfigure}
\usepackage{color}

\newtheorem{theorem}{Theorem}
\newtheorem{corollary}{Corollary}
\newtheorem{problem}{Problem}

\begin{document}

\conferenceinfo{Technical Report}{ArXiv}

\title{Spot: An accurate and efficient multi-entity device-free WLAN localization system
}
\numberofauthors{2}

\author{
\alignauthor
Ibrahim Sabek\\
       \affaddr{Comp. and Sys. Eng. Dept.}\\
       \affaddr{Alexandria University, Egypt}\\
       \email{ibrahim.sabek@alexu.edu.eg}
\and
\alignauthor
Moustafa Youssef\\
       \affaddr{Wireless Research Center}\\
       \affaddr{E-JUST, Egypt}\\
       \email{moustafa.youssef@ejust.edu.eg}
}

\maketitle

\begin{abstract}
Device-free (DF) localization in WLANs has been introduced as a value-added service that allows tracking indoor entities that do not carry any devices. Previous work in DF WLAN localization focused on the tracking of a single entity due to the intractability of the multi-entity tracking problem whose complexity grows exponentially with the number of humans being tracked.

In this paper, we introduce \textit{Spot} as an accurate and efficient system for multi-entity DF detection and tracking. \textit{Spot} is based on a probabilistic energy minimization framework that combines a conditional random field with a Markov model to capture the temporal and spatial relations between the entities' poses. A novel cross-calibration technique is introduced to reduce the calibration overhead of multiple entities to linear, regardless of the number of humans being tracked.  This also helps in increasing the system accuracy.

We design the energy minimization function with the goal of being efficiently solved in mind. We show that the designed function can be mapped to a binary graph-cut problem whose solution has a linear complexity on average and a third order polynomial in the worst case.
We further employ clustering on the estimated location candidates as a means for reducing outliers and obtaining more accurate tracking in the continuous space.
Experimental evaluation in two typical testbeds, with a side-by-side comparison with the state-of-the-art, shows that \textit{Spot} can achieve a multi-entity tracking accuracy of less than 1.1m. This corresponds to at least 36\% enhancement in median distance error over the state-of-the-art DF localization systems, which can only track a single entity. In addition, \textit{Spot} can estimate the number of entities correctly to within one difference error. This highlights that \textit{Spot} achieves its goals of having an accurate and efficient software-only DF tracking solution of multiple entities in indoor environments.

\end{abstract}

%

\keywords{Binary graph-cut, conditional random fields, device-free localization, energy minimization, Markov models, multi-entity tracking}

\newpage
\section{Introduction}

Device-free (DF) localization \cite{Youssef:DFPchallenges} is a concept that allows the detection and tracking of entities that do not carry any devices nor participate actively in the localization process. DF localization has a number of applications including intrusion detection, border protection, smart homes, and traffic estimation.

Different approaches have been proposed for addressing the DF detection and tracking problem that can be categorized into two main groups:  Those that require special hardware and those that leverage the already installed wireless infrastructure.

Radar-based systems, e.g. \cite{Yang:UWB,Lin:Doppler,Haimovich:MIMO}, computer vision systems, e.g. \cite{Moeslund:Camera,Krumm:Camera}, and radio tomographic imaging (RTI) \cite{Patwari:RTI} provide accurate detection and tracking. However, all require the installment of special hardware to track a DF entity. On the other  hand, systems that leverage the currently installed wireless networks, e.g. WLAN \cite{Youssef:DFPchallenges,Moussa:smart,Yang:Performin,Kosba:RASID,Seifeldin:Nuzzer}, provide a software only solution for DF localization and have the advantage of scalability in terms of cost and coverage area.

\begin{figure}[!t]
\centering
\includegraphics[width=3in]{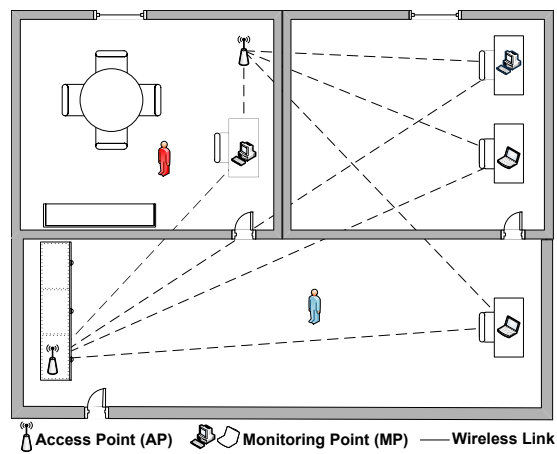}
\caption{Typical architecture of a DF WLAN localization system.}
\label{fig:DFP_arch}
\end{figure}

WLAN DF localization systems are based on the concept \cite{Youssef:DFPchallenges} that the presence of an entity in an RF environment affects the signal strength, which can be used to detect, track, and identify the entities. Figure~\ref{fig:DFP_arch} shows the architecture of a typical WLAN DF localization system. The system consists of signal transmitters (e.g. standard APs); signal receivers or monitoring points (MPs), such as any WiFi enabled device (e.g. laptops and APs themselves); and an application server that collects the received signal strength (RSS) for the different streams (where a stream is a single (AP, MP) pair) readings and processes them to detect events.

To track entities, and due to the complex relation between RSS and distance in indoor environment, a fingerprint has been traditionally used to capture the RSS behavior at different locations in the area of interest. To construct the fingerprint, a human stands at different locations in the area of interest and her effect on the RSS of the different streams is recorded at the MPs. Constructing the fingerprint for multiple entities though requires trying all the combinations of entities over all calibration locations, which grows exponentially with the number of fingerprint locations. Therefore, current effort in WLAN DF localization focuses only on the tracking of a \textbf{\emph{single entity}}.

In this paper, we introduce \textit{Spot} as a system for the accurate and efficient detection and tracking of \emph{\textbf{multiple DF entities}} in a WLAN environment. \textit{Spot} is based on a probabilistic energy minimization framework that combines a conditional random field with a Markov model:  Given a RSS vector of all the streams in the area of interest, the problem of estimating the most probable active user locations is mapped to an energy minimization problem whose potential function is designed to preserve \textit{smooth} and \textit{consistent} labels for active locations relative to their neighbors and their movement history. In addition, we show that the designed energy function is \emph{regular} in the sense that it can be mapped to a binary graph-cut problem whose solution has a linear complexity on average and a cubic polynomial in the worst case.
\textit{Spot} also introduces a novel cross-calibration technique to reduce the calibration overhead of multiple entities to \textbf{\emph{linear}} in the number of locations, as compared to \emph{\textbf{exponential}} for the current state-of-the-art.  This also helps in increasing the system accuracy.

Since a human can affect more than one location in the area of interest, we further employ clustering on the estimated location candidates as a means for reducing outliers and obtaining more accurate tracking in the continuous space. Each detected cluster represents a human whose location in the center of mass of fingerprint locations inside the cluster.
Experimental evaluation in two typical testbeds, with a side-by-side comparison with the state-of-the-art, shows that \textit{Spot} can achieve a tracking accuracy of less than 1.1m. This corresponds to at least 36\% enhancement in median error over the state-of-the-art DF localization systems in the two testbeds, while enabling the tracking of multiple entities. In addition, \textit{Spot} can estimate the number of entities with 100\% accuracy to within one difference error. This accuracy advantage is obtained without scarifying computational power.

In summary, the contribution of this paper is four-fold: (1) We formulate the \textbf{\emph{multi-entity}} DfP problem as an energy-minimization framework that preserves both spatial and temporal smoothness and consistency (Section~\ref{sec:Spot}), (2) We show how to map the problem to a binary graph-cut problem and obtain its solution efficiently; and present our novel cross-calibration technique that reduces the calibration complexity to linear in the number of locations, rather than exponential as in the current state-of-the-art (Section~\ref{sec:prob}), (3) We present clustering techniques for reducing noise and enhancing accuracy (Section~\ref{sec:cluster}), (4) We evaluate the system in two typical WiFi testbeds and compare it to the state-of-the-art DF WLAN localization techniques (Section~\ref{sec:eval}).

 We also discuss issues related to the system design and present future directions in Section~\ref{sec:discuss}. Related work and paper conclusions are presented in sections \ref{sec:related} and \ref{sec:conclude} respectively.

\section{The Spot System}
\label{sec:Spot}
In this section, we give the details of \textit{Spot}. We start by an overview of the system architecture followed by the details of the system modules.

\begin{figure}[!t]
\centering
\includegraphics[width=3in]{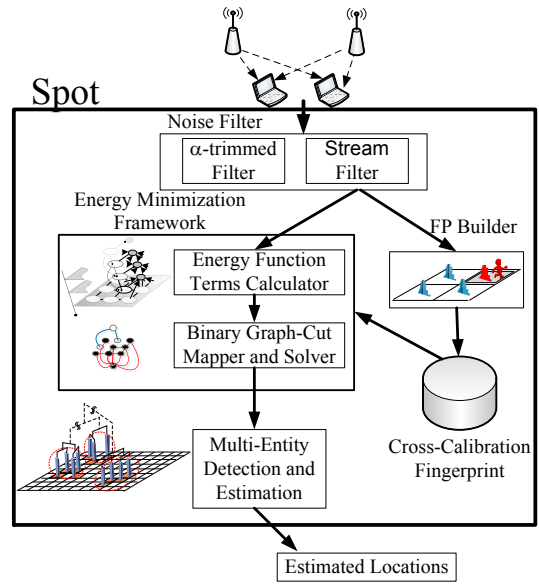}
\caption{Spot system architecture.}
\label{fig:arch}
\end{figure}

\subsection{Overview}
\label{sec:overview}
Figure~\ref{fig:arch} shows the system architecture. The system collects the signal strength readings from the monitoring points for processing. There are two phases of operation:

\begin{enumerate}
  \item Offline training phase: to estimate the system parameters based on the collected signal strength readings and construct the device-free fingerprint. During this phase, a human stands at different locations in the area of interest and the RSS at each MP is recorded.
      Note that our formulation requires only one human for calibration in the offline phase, regardless of the number of humans during the system operation (Section~\ref{sec:FP}). This significantly reduces the calibration overhead as compared to the state-of-the-art DF systems.
  \item Online tracking phase:  to estimate the multi-entities' locations based on
the received signal strength from each stream and the fingerprint prepared in the offline phase using the energy minimization framework.
\end{enumerate}

The \emph{Noise Filtering} module reduces the noise in the RSS readings and filters outlier streams.

The \emph{Energy Minimization Framework}  calculates the probabilities used in the energy minimization framework, constructs an equivalent graph, and estimates the most probable active locations (i.e. environment map) based on solving a binary graph-cut problem.

The \emph{Multi-Entity Detection and Estimation} module uses clustering techniques to estimate the number of entities and the location of each entity. A non-zero estimated number of entities is equivalent to a detection event in the area of interest.
\subsection{System Model}
\label{sec:model}
Without loss of generality, let $\mathbb{X}$ be a
2-dimensional physical space. At each location $x \in \mathbb{X}$,
we can get the signal strength from $k$ streams. We denote the
$k$-dimensional signal strength space as $\mathbb{S}$. Each element
in this space is a $k$-dimensional vector, $s= (s_1, ...,
s_k)$, whose entries represent the signal strength readings from a different (AP, MP) pair. We further assume
that the samples from \textit{different} streams are independent.


Given that $m$ humans are standing in the area of interest, $m \ge 0$, these humans will affect the different streams. 
Therefore, the problem becomes:

\begin{problem}
\label{pro:main}
  We want to both estimate the number of humans $\hat{m}$ and, if $\hat{m} > 0$, the locations of these humans $\{x_i| 0< i \le \hat{m} , x_i \in \mathbb{X}\}$, such that the probability $P(x_1, x_2, ..., x_{\hat{m}}|s)$ is maximized.
\end{problem}

In Section \ref{sec:prob}, we assume a discrete $\mathbb{X}$ space. We
discuss the continuous space case in Section ~\ref{sec:cluster}.

\subsection{Noise Filtering}
\label{sec:filter}

The aim of this module is to preprocess collected RSS readings during the offline and online phases to reduce the noise effects and detect outliers. We use two techniques: RSS filtering and stream filtering.
\subsubsection{RSS Filtering}
RSS is a noisy quantity due to the time varying wireless channel \cite{WCNC03}. To reduce the noise effect, we apply an $\alpha$\textit{-trimmed Mean} filter \cite{Yang:Performin} to the measured RSS values. An $\alpha$-trimmed filter has the advantage of handling both impulse and gaussian noise, as compared to mean and median filters that can handle only one of them. In addition, it is simple to implement: Given a window of $q$ RSS samples, the $\alpha$-trimmed filter sorts the samples (such that  $RSS_1 \leq RSS_2 \leq \cdots \leq RSS_q$) and then discards the $\alpha$ extreme samples and averages the remaining samples.
The output of the $\alpha$-trimmed mean filter is:
\begin{equation}
 f(q; \alpha) = \frac{1}{q- 2\left\lceil \alpha q \right\rceil} \sum_{i=\left\lceil \alpha q \right\rceil+1}^{q-\left\lceil \alpha q \right\rceil}{\textrm{RSS}_i}
\label{eq:15}
\end{equation}
 where $0 \leq \alpha < 0.5$. We set $\alpha$ to 0.2 as it is a reasonable value for the window size we use in our system (Section~\ref{sec:eval}).

\subsubsection{Streams Filter}

Even after smoothing the RSS values, using the alpha-trimmed filter, the readings of a single stream may have significantly changed between the offline and online phases due to changes in the environment. To detect this change and filter outlier streams, we use the Analysis of Variance (ANOVA)  to test whether the mean of the RSS of a particular stream have significantly changed between the offline and online phase. If there is a statistically significant difference, the stream is filtered from the current calculations.

\section{Energy Minimization Framework}
\label{sec:prob}

In this section, we assume a discrete $\mathbb{X}$ space with $n$ locations. Let $\{\alpha_i^t, 0<i < n$\} be a set of bernoulli random variables, where $\alpha_i^t$ takes the value of 1 if a human is standing at location $i \in \mathbb{X}$ at time $t$, and 0 otherwise.  Therefore, the problem of estimating the number of entities $\hat{m}$ and their locations, given the received signal strength vector $s$ (Problem \ref{pro:main}),
is equivalent to finding the assignment of $\alpha_i^t$'s that maximizes \begin{equation}
P(\mathbb{M}^t|s)
\label{eq:no_temporal}
\end{equation}

where $\mathbb{M}^t= (\alpha_1^t, \alpha_2^t, ..., \alpha_n^t)$. We refer to $\mathbb{M}^t$ as the \emph{\textbf{environment map}} at time $t$. In this case, $\hat{m}= \sum_{i=1}^{n} \alpha_i^t$ and the most probable locations of the $\hat{m}$ entities are the locations whose $\alpha_i^t$'s are assigned to one.

Traditional work on probabilistic WLAN localization, \textbf{\emph{both device-based and device-free}}, e.g. \cite{Youssef:Horus,Seifeldin:Nuzzer_Report}, use Bayesian inversion to estimate $P(\mathbb{M}^t|s)$. However, these systems typically assume only one entity in the area of interest. Moving to more than one entity makes this Bayesian inversion approach intractable as the complexity of estimating $P(s|\mathbb{M}^t)$ increases exponentially with the number of entities that need to be tracked \cite{Seifeldin:Nuzzer_Report} (due to the need to try all combinations of humans' poses in the area of interest).

Alternatively, we use an energy-minimization framework that leverages this joint estimation problem of  $\alpha_i^t$'s to enhance the accuracy while, at the same time, obtains an efficient solution.

In particular, we represent the spatial constraints on the human position by a Conditional Random Field (CRF) model favoring coherence between adjacent locations. The temporal relation between the human locations is captured by a second order Hidden Markov Model (Figure~\ref{fig:model}). Estimation is finally performed by mapping the problem to a binary graph-cut problem that can be efficiently solved in $O(n)$ on average and $O(n^3)$ in the worst case.

A CRF is an undirected graphical model that defines a log-linear distribution over label sequences given a particular observation sequence \cite{Lafferty:CRF}. It was introduced as a framework for labeling and segmenting data that models the conditional probability $P(Y|X)$, where $X$ and $Y$ are the observations and the labels respectively. CRFs have the advantage of relaxing the strong independence assumptions made by Hidden Markov Models \cite{Sutton06:CRFintroduction} for a large number of variables (such as those in the environment map). In addition, CRFs avoid the label bias problem [8], a weakness exhibited by maximum entropy Markov models [9] (MEMMs) and other conditional Markov
models based on directed graphical models. Therefore, CRFs outperform both MEMMs
and HMMs on a number of real-world sequence labeling tasks [8, 11, 15].

\begin{figure}[!t]
\centering
\includegraphics[width=3.5in]{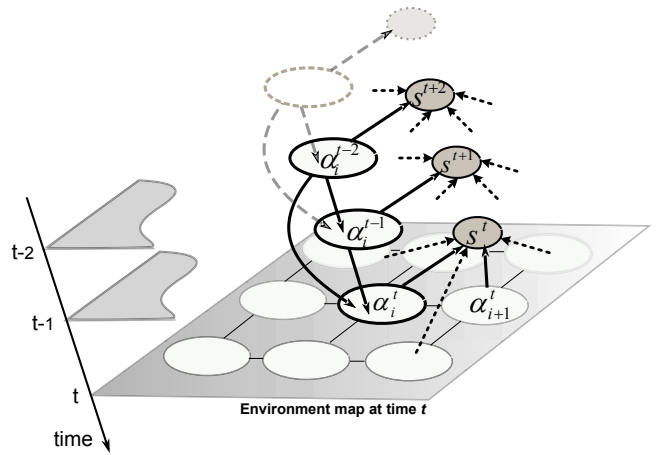}
\caption{\textbf{Combined CRF-HMM model}. This graphical
model illustrates both the signal strength likelihood
together with the spatial and temporal priors. The same
temporal chain is repeated at each discrete location. Spatial dependencies
are illustrated for a 4-neighborhood system. The entire environment map affects the RSS vector $s^t$.}
\label{fig:model}
\end{figure}
By combining a HMM for temporal relations with a CRF for spatial relations, we gain the benefit of both worlds in terms of accuracy and efficiency. In the rest of this section, we describe the construction of the energy minimization framework and how we efficiently solve it.

\subsection{Framework Construction}
Our model extends the model in Equation~\ref{eq:no_temporal} to capture the temporal constraints. In particular, our goal becomes to find the environment map at time $t$, $\mathbb{M}^t$, that maximizes:
\begin{equation}
P(\mathbb{M}^t|s^t,\mathbb{M}^{t-1}, \mathbb{M}^{t-2})
\label{eq:temporal}
\end{equation}

assuming a second order temporal dependence in the Markov model as we discuss in details later.

Based on CRF theory \cite{Kumar:2003}, our combined model estimates the probability of Equation~\ref{eq:temporal} as

\begin{equation}
  P(\mathbb{M}^t|s^t,\mathbb{M}^{t-1}, \mathbb{M}^{t-2})  \propto \exp - \left\{ E^{i} \right\}
\label{eq:eng}
\end{equation}

where $E^{i}= E(s^t,\mathbb{M}^{t}, \mathbb{M}^{t-1}, \mathbb{M}^{t-2})$ is an energy function that captures the required constraints on the DF tracking problem. That is, we want to estimate the current environments map given the previous two environment maps and the current signal strength vector measured at the monitoring points. This is obtained by the joint maximization of the posterior in Equation \ref{eq:eng}, which is equivalent to the minimization of energy:

\begin{equation}
  \hat{\mathbb{M}}^t = (\hat{\alpha}_1^t, \hat{\alpha}_2^t, ..., \hat{\alpha}_n^t) = \arg \min E^{i}
\label{eq:eng_min}
\end{equation}

\textbf{Energy Terms:} For our DF tracking problem, each $E^i$ is composed of three components:
\begin{equation}
\begin{split}
  E^{i} &= E(s^t,\mathbb{M}^{t}, \mathbb{M}^{t-1}, \mathbb{M}^{t-2})\\
  &= V^{\textrm{Tm}}(\mathbb{M}^{t}, \mathbb{M}^{t-1}\mathbb{M}^{t-2})+ V^{\textrm{Sp}}(\mathbb{M}^{t}, s^t)+ U^{\textrm{SS}}(\mathbb{M}^{t}, s^t)
\end{split}
\label{eq:e_i}
\end{equation}

The term $V^{\textrm{Tm}}(\mathbb{M}^{t}, \mathbb{M}^{t-1}\mathbb{M}^{t-2})$ is a temporal prior term that represents a second-order Markov chain that imposes a tendency to temporal continuity on the environment map.

The term $V^{\textrm{Sp}}(\mathbb{M}^{t}, s^t)$ presents a spatial prior term which imposes a
tendency to spatial continuity of the environment map, favoring coherent assignments.

Finally, the $U^{\textrm{SS}}(\mathbb{M}^{t}, s^t)$ term is a likelihood term that evaluates the evidence
for location labels based on the RSS distributions in the case of human absence and presence.

This energy model captures both the signal strength likelihood together with the spatial and temporal priors. Figure \ref{fig:model} shows the graphical representation of the model. Details of these factors are given in the next subsections.

\begin{figure}[!t]
\centering
\includegraphics[width=3in]{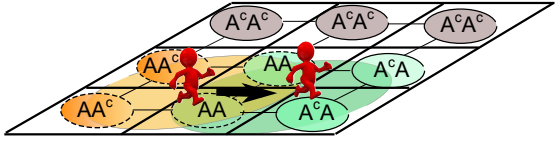}
\caption{Temporal transitions at a location (assuming the human affects four locations inside the circle). (a) An entity moves towards the right from time $t-2$ to time $t-1$. (b) Between the
two time instances, locations may remain in their own active or inactive states (denoted $A$ and $A^c$, respectively) or change state; thus
defining four different kinds of temporal transitions: $A \rightarrow A$,
$A \rightarrow A^c$, $A^c \rightarrow A$, $A^c \rightarrow A^c$. Those transitions influence the label
that a location in the environment map is going to assume at time $t$.}
\label{fig:temporal}
\end{figure}

\newpage
\subsubsection{Temporal prior term}
Figure \ref{fig:temporal} shows the four different temporal
transitions a location assignment (label) can undergo in an environment map, based on
a two time instances analysis. For instance, an active location may
remain active (locations labeled $AA$ in Figure~\ref{fig:temporal}) or
move to the inactive state (locations labeled $A A^c$) etc. It is important to note that a first-order Markov chain is inadequate
to convey the nature of temporal coherence in this
problem; a higher order Markov chain is required. For example,
a location that was inactive at time $t - 2$
and is active at time $t - 1$ is far more likely to
remain active at time $t$ than to go back to the
inactive state. A second-order Markov chain is used to balance performance and complexity. We quantify the effect of the order of the chain in Section~\ref{sec:res_hmm}.

These intuitions are captured probabilistically and incorporated
in our energy minimization framework by means of
a second order Markov chain, as shown in the graphical
model of Figure~\ref{fig:model}. The temporal transition priors ($P(\alpha_i^t|\alpha_i^{t-1}, \alpha_i^{t-2})$) are
learned during the training phase. This leads to the following joint temporal prior term:
\begin{equation}
V^{\textrm{Tm}}(\mathbb{M}^{t}, \mathbb{M}^{t-1}\mathbb{M}^{t-2})= \beta \sum_{i=1}^n -[\log P(\alpha_i^t|\alpha_i^{t-1}, \alpha_i^{t-2})]
\label{eq:v_t}
\end{equation}
where  $\beta < 1$ is a discount factor to allow for multiple
counting across non-independent locations. The optimal value of $\beta$, as well as the other parameters of
the CRF, are obtained discriminatively from the training data using the iterative scaling algorithm \cite{Lafferty:2001}.

\subsubsection{Spatial prior term}
This term should favor coherent environment maps, i.e. adjacent locations have similar labels. We adapt a variation of the Ising model commonly used for segmentation applications \cite{Boykov01:boundary} where the spatial energy term can be represented as:

\begin{align}
	V^{\textrm{Sp}}(\mathbb{M}^{t}, s^t)
	= \sum_{\{c_i,c_j\} \in \mathbb{N}} V_{\{c_i,c_j\}}^{\textrm{Sp}} (\alpha_{c_i}^{t},\alpha_{c_j}^{t},s^t) & \nonumber \\
	= \gamma \sum_{\{c_i,c_j\} \in \mathbb{N}, \alpha_{c_i}^{t} \neq \alpha_{c_j}^{t}}  \left(\frac{1 + e^{-\left\|P(s^t|\alpha_{c_i}^{t})-P(s^t|\alpha_{c_j}^{t})\right\|^2}}{2}\right)
\label{eq:spatial_prior_term}
\end{align}
where 
$\mathbb{N}$ is the set of pairs of neighboring locations. The term $P(s^t|\alpha_{c_i}^{t})$ represents the conditional probability of receiving the signal strength vector $s^t$ when the human is present at location $c_i$ ($\alpha_{c_i}^{t}=1$) or not present ($\alpha_{c_i}^{t}=0$). This can be estimated during the training phase as described in Section~\ref{sec:FP}.
The constant $\gamma$ is a strength parameter for the coherence
prior that can be estimated based on the training data.

\subsubsection{Likelihood for signal strength}
The term $U^{\textrm{SS}}(\mathbb{M}^{t}, s^t)$ is the log likelihood of the received signal strength. The term is defined as :
\begin{align}
	U^{\textrm{SS}}(\mathbb{M}^{t}, s^t)
	&= \delta \sum_{i=1}^n \left[ - \log P (s^t~|~ \alpha_i^{t})\right]
\label{eq:rss_likelihood_term}
\end{align}

where  $\delta < 1$ is a discount factor to allow for multiple
counting across non-independent locations whose optimal value is obtained discriminatively from the training data.

RSS likelihoods are learned during the offline training phase as described in the next section.

\begin{figure}[!t]
\centering
\subfigure[Traditional FP construction for one entity]{\includegraphics[width=3.5in]{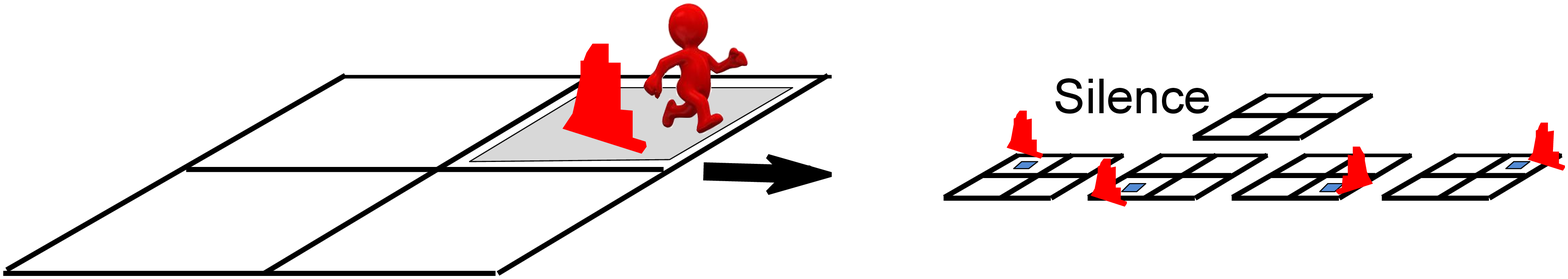}}
\subfigure[Traditional FP construction for multi-entities]{\includegraphics[width=3.5in]{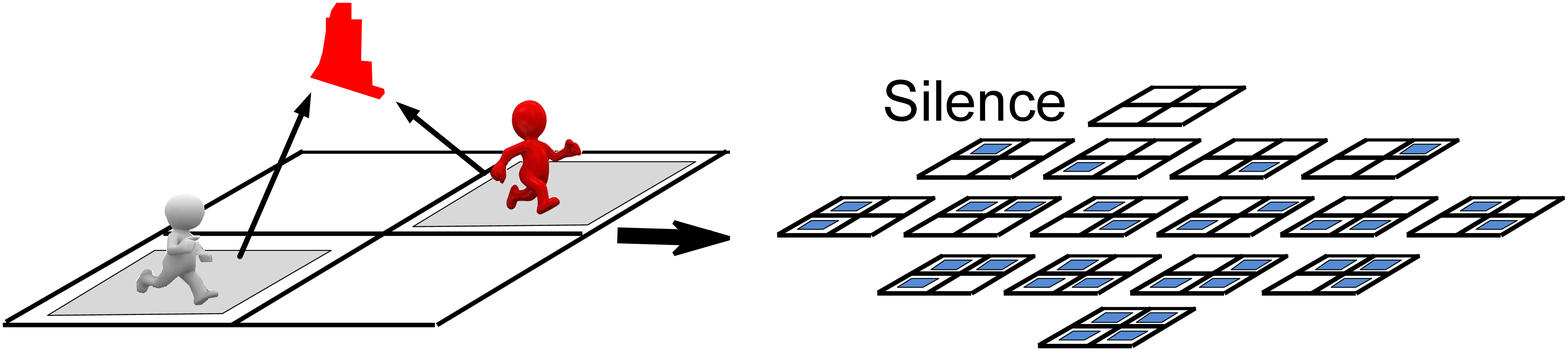}}
\subfigure[FP construction for Spot]{\includegraphics[width=3.5in]{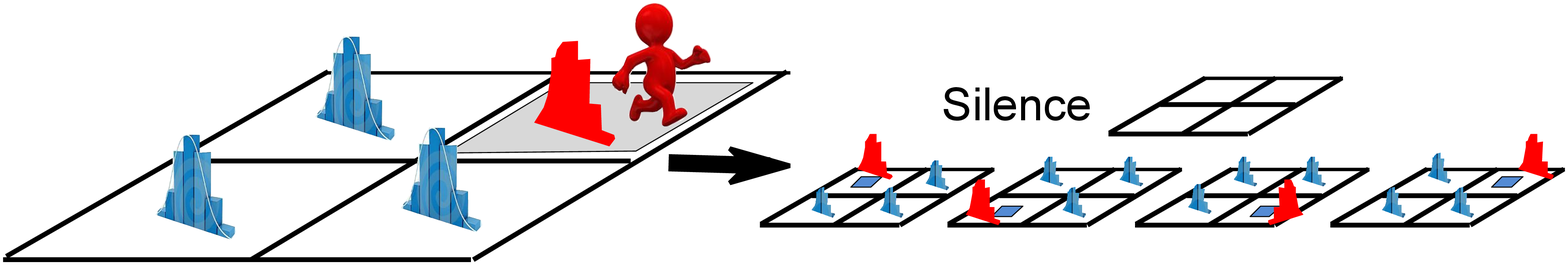}}
\caption{\textbf{Difference between fingerprint (FP) construction for traditional DF systems and Spot}. Left figure represents an example while the figure to the right represents all required combinations (FP complexity). (a)  FP construction for one entity in a traditional DF system: One histogram, representing the active state RSS, is stored in only one location (where the user is standing). (b) FP construction for two entities in a traditional DF system: Two humans are needed along with trying all their poses combinations in the area of interest ($\binom{n}{2}$). A total of $2^n$ combinations are required to capture the fingerprint of all possible number of humans and their locations. (c) FP construction in Spot: Only one human is needed to construct the FP regardless of the actual number of humans to be tracked (due to the environment map formulation). A human standing at one location ($x$) captures the RSS active histogram at this location ($P(s^t~|~ \alpha_x^{t}=1$)) and affects the inactive histograms at all other FP locations, ($P(s^t~|~ \alpha_i^{t}=0, \forall i \neq x$)), (cross-calibration). This leads to two histograms at every FP location.}
\label{fig:FP}
\end{figure}

\subsection{Fingerprint Construction}
\label{sec:FP}
During the offline phase, \textit{Spot} needs to estimate both the RSS likelihood, $P(s^t~|~ \alpha_i^{t})$, and the temporal transition priors, $P(\alpha_i^t|\alpha_i^{t-1}, \alpha_i^{t-2})$. This is the functionality of the \emph{Fingerprint Builder Module}.

\subsubsection{RSS likelihood}
Based on the described signal strength terms in the energy function, i.e. the spatial prior and signal strength likelihood, the fingerprint of \textit{Spot} is \textbf{\emph{unique}} among all the previous device-based and device-free WLAN localization systems. Figure~\ref{fig:FP} shows the difference between the fingerprint for a traditional DF system and that of \textit{Spot}.  In particular, we use a \textbf{\emph{cross-calibration}} technique, where an entity standing at location $x$ contributes to the \textit{active} RSS likelihoods of $x$  ($P(s^t~|~ \alpha_x^{t}=1$)) and the \textit{inactive} RSS likelihoods of the all remaining $n-1$ FP locations ($P(s^t~|~ \alpha_i^{t}=0, \forall i \neq x$)). This has two advantages: (1)  It reduces the the coverage sparsity problem in the presence of few streams and (2) it converts the intractable exponential number of cases of building the fingerprint for traditional DF systems \cite{Seifeldin:Nuzzer_Report} to a linear complexity problem, as only one human is needed for training, regardless of the number of humans to be tracked.

In summary, at each location, we have two histograms for the RSS corresponding to the active and inactive states respectively using the cross-calibration technique. The fingerprint is the collection of these two histograms over all locations $x \in \mathbb{X}$. We smooth the generated histograms by convolution with separable gaussian kernels to avoid the zero-probability problem of missing values in the training set.

\begin{figure}[!t]
\centering
\includegraphics[width=2.5in]{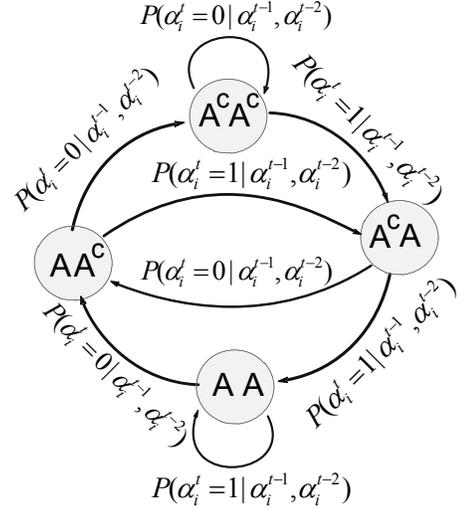}
\caption{Finite state diagram for the possible temporal transitions at any location. The sum of arcs originating from any node is one, leading to only four degrees of freedom.}
\label{fig:temporal_state_diagram}
\end{figure}

\newpage
\subsubsection{Temporal transition prior}
Although there are eight possible
transitions (Figure~\ref{fig:temporal_state_diagram}), due to probabilistic normalization ($P(\alpha_i^t =
1|\alpha_i^{t-1}, \alpha_i^{t-2}) = 1 - P(\alpha_i^t =
0|\alpha_i^{t-1}, \alpha_i^{t-2})$), the temporal
priors have only four degrees of freedom. These temporal transition priors are learned from the training data.

\subsection{Most Probable Map Estimation}
In this section, we show how to obtain the optimal environment map by solving the energy minimization problem in Equation~\ref{eq:eng_min} efficiently through mapping it to a binary graph-cut problem. We start by a brief background on graph-cuts, followed by how to map the DF energy minimization problem to a graph problem.

\subsubsection{Binary graph-cuts}
Let $\cal{G}= (\cal{E}, \cal{V})$ be a directed graph with nonnegative
edge weights that has two special vertices (terminals): the source $s$ and the sink $t$. An $s-t$-cut (or a binary graph-cut) $C= \{S, T\}$ is a partition of the vertices of $\cal{V}$ into two disjoint sets $S$ and $T$ such that $s \in S$ and $t \in T$. The cost of the cut is the sum of costs of all edges that go from $S$ to $T$:
\[
c(S, T)= \sum_{u \in S, v \in T, (u, v) \in \cal{E}} c(u, v)
\]
The minimum $s-t$-cut problem is to find a cut $C$ with the
smallest cost. Ford and Fulkerson \cite{ford1962flows} proved that this is equivalent to computing the maximum flow from the source to sink. This problem can be solved in a low order polynomial in $n$ \cite{Boykov:2001} \footnote{Note however that
generalizations of the minimum s-t-cut problem to
involve more than two terminals are NP-hard. We prove in the next subsection that our problem can be mapped to a binary graph-cut problem.}.
This way, a binary graph-cut can be considered as a binary labeling of the graph nodes to be either $s$ or $t$.

\begin{figure}[!t]
\centering
\hspace{-.8in}
\includegraphics[width=2.5in]{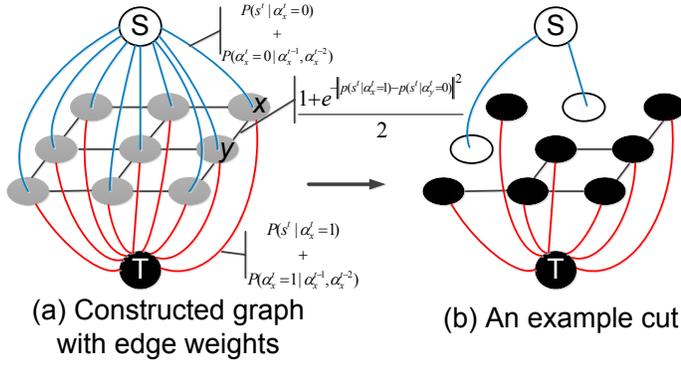}
\caption{Mapping the DF energy minimization problem to a binary graph-cut problem.}
\label{fig:graph}
\end{figure}

\subsubsection{DF tracking as a binary graph-cut problem}
\label{sec:DF_binary_graph_cut}
Not every energy minimization function can be solved using a graph-cut approach. According to \cite{Kolmogorov:2002}, the following theorem  gives a necessary and sufficient condition for a function to be solved using the binary min-cut algorithm.
\begin{theorem}
  Let $E$ be a function of $n$ binary variables in the form of
  \[
  E(x_1, ..., x_n)= \sum_i E^i(x_i)+ \sum_{i<j} E^{ij} (x_i,x_j)
  \]
  Then, $E$ is graph-representable if, and only if, each term $E^{ij}$ satisfies the inequality
  \begin{equation}
      E^{ij} (0,0)+   E^{ij} (1,1) \leq   E^{ij} (1,0)+   E^{ij} (0,1).
      \label{eq:graph}
  \end{equation}
\end{theorem}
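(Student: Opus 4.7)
The plan is to prove both directions: (necessity) every graph-representable $E$ is regular, and (sufficiency) every regular $E$ can be encoded as a graph whose minimum $s$-$t$-cut equals $\min_x E(x)$ up to an additive constant.

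For necessity, I would fix any two variables $x_i, x_j$ and, holding all other assignments fixed, examine how the cut cost depends on the four joint settings. Edges from $s$ or to $t$ incident to $v_i$ contribute a cost that is linear in $x_i$ alone, and similarly for $v_j$; these contributions cancel in the combination $E^{ij}(0,0)+E^{ij}(1,1)-E^{ij}(1,0)-E^{ij}(0,1)$. The only contributions that survive this combination come from edges directly between $v_i$ and $v_j$, which are cut only in the mixed assignments $(0,1)$ or $(1,0)$. Because edge capacities are nonnegative, the surviving contribution is $\le 0$, which is precisely the regularity inequality~(\ref{eq:graph}).

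For sufficiency, I would construct the graph explicitly. Introduce a vertex $v_i$ for each variable and identify the side of the cut (source versus sink) with the value of $x_i$. Any unary term $E^i(x_i)$ is realized by a single edge from $s$ to $v_i$ or from $v_i$ to $t$, depending on the sign of $E^i(1)-E^i(0)$, absorbing the remainder into an additive constant. For each pairwise term I would use the canonical decomposition
\begin{equation*}
E^{ij}(x_i,x_j) \;=\; A \;+\; B\,x_i \;+\; C\,x_j \;+\; D\,x_i(1-x_j),
\end{equation*}
where $A=E^{ij}(0,0)$, $B=E^{ij}(1,1)-E^{ij}(0,1)$, $C=E^{ij}(0,1)-E^{ij}(0,0)$, and
\begin{equation*}
D \;=\; E^{ij}(1,0)+E^{ij}(0,1)-E^{ij}(0,0)-E^{ij}(1,1).
\end{equation*}
The constant is pushed into the additive offset, the linear pieces $B x_i$ and $C x_j$ are absorbed into the source/sink edges of $v_i$ and $v_j$, and the bilinear piece $D\,x_i(1-x_j)$ is encoded by a single directed edge $(v_i,v_j)$ of capacity $D$, which is cut exactly when $x_i=1$ and $x_j=0$.

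The main obstacle is the bookkeeping: after summing the decompositions of all pairwise terms, each node accumulates contributions to both its $s$-edge and its $t$-edge, and I must show these can be consolidated into a single nonnegative edge on one side (by subtracting the minimum from both and recording the difference as an additive constant). The construction only succeeds if every $D$ arising from a pairwise term is nonnegative, and this is exactly the regularity inequality~(\ref{eq:graph}); hence regularity is not just necessary but also the tight condition that makes the construction go through. Once every capacity is verified nonnegative, the cost of any $s$-$t$-cut equals $E(x)+\text{const}$ by construction, so the minimum cut yields $\arg\min_x E(x)$, completing the sufficiency direction.
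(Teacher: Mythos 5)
The paper does not actually prove this statement: it is quoted verbatim from \cite{Kolmogorov:2002} as Theorem~1, and the only argument the paper supplies is the short proof of the Corollary (that the DF energy satisfies inequality~(\ref{eq:graph}) because the left-hand side is zero and the right-hand side is positive), which uses only the ``if'' direction. So your proposal is being compared against the original Kolmogorov--Zabih argument rather than anything in this paper. Your sufficiency direction is essentially that argument and is correct: the decomposition $E^{ij}(x_i,x_j)=A+Bx_i+Cx_j+Dx_i(1-x_j)$ checks out on all four assignments, $D\ge 0$ is exactly regularity, and the reparameterization trick (one $s$- or $t$-edge per node after subtracting the minimum of the two accumulated terminal capacities into an additive constant) correctly handles negative intermediate coefficients. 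This is also, implicitly, the same construction the paper uses when it assigns t-edge and n-edge weights in Section~4.3.2.

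The genuine gap is in your necessity direction. Graph-representability in the sense of \cite{Kolmogorov:2002} permits \emph{auxiliary} non-terminal nodes beyond $v_1,\dots,v_n$, and defines $E(x)$ as a constant plus the \emph{minimum} cut cost over all cuts consistent with the assignment of $v_1,\dots,v_n$ --- i.e., a minimization over the auxiliary nodes' sides. Your cancellation argument treats the cut cost as an explicit sum of edge indicator terms in $x_i,x_j$ alone, which is only valid when every node of the graph is one of the $v_i$. With auxiliary nodes the function whose second mixed difference you must bound is $\min_{\text{aux}}(\text{cut cost})$, and the linear-terms-cancel argument no longer applies directly. The standard repair is an uncrossing/submodularity argument: if $S_{01}$ and $S_{10}$ are minimum source sets for the mixed assignments, then $S_{01}\cap S_{10}$ and $S_{01}\cup S_{10}$ are feasible for $(0,0)$ and $(1,1)$ respectively, and submodularity of the cut function, $c(S\cap S')+c(S\cup S')\le c(S)+c(S')$, yields inequality~(\ref{eq:graph}). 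Without this (or an explicit restriction of the definition to graphs with no auxiliary nodes), the ``only if'' half of your proof does not go through.
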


Note that the condition only involves the binary terms, i.e. those that involve the relation between two variables. This maps only to the spatial consistency term in our DF energy function (Equation~\ref{eq:spatial_prior_term}).
\begin{corollary}
  The DF energy minimization function is graph-representable.
\end{corollary}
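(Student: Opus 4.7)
The plan is to apply the theorem directly by decomposing the energy function in Equation~\ref{eq:e_i} into unary and pairwise parts and then checking the regularity inequality (Equation~\ref{eq:graph}) only where it can fail, namely on the pairwise contributions. First I would fix the observed signal strength vector $s^t$ and the previous environment maps $\mathbb{M}^{t-1}, \mathbb{M}^{t-2}$ as constants, so that the energy $E^i$ is a function solely of the current unknowns $\alpha_1^t, \ldots, \alpha_n^t$. Inspecting each term in Equation~\ref{eq:e_i}: the temporal prior $V^{\textrm{Tm}}$ (Equation~\ref{eq:v_t}) is a sum over locations $i$ of functions depending on $\alpha_i^t$ alone (because $\alpha_i^{t-1}, \alpha_i^{t-2}$ are fixed), so it is unary; the signal strength likelihood $U^{\textrm{SS}}$ (Equation~\ref{eq:rss_likelihood_term}) is also a sum of per-location terms and hence unary; only the spatial prior $V^{\textrm{Sp}}$ (Equation~\ref{eq:spatial_prior_term}) couples pairs of distinct variables $(\alpha_{c_i}^t, \alpha_{c_j}^t)$ across neighboring locations.

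Next I would verify regularity for the single pairwise family $E^{ij} = V^{\textrm{Sp}}_{\{c_i,c_j\}}$. By the definition in Equation~\ref{eq:spatial_prior_term}, the pairwise cost is zero on the diagonal (when $\alpha_{c_i}^t = \alpha_{c_j}^t$) and equals a common positive value
\[
w_{ij} \;=\; \gamma\,\frac{1 + e^{-\|P(s^t|\alpha_{c_i}^{t})-P(s^t|\alpha_{c_j}^{t})\|^2}}{2}
\]
whenever the two labels disagree. Hence $E^{ij}(0,0) = E^{ij}(1,1) = 0$ and $E^{ij}(0,1) = E^{ij}(1,0) = w_{ij}$, so the regularity inequality reduces to $0 \le 2 w_{ij}$. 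Since $\gamma > 0$ (it is a nonnegative coherence strength estimated from training data) and the exponential factor is strictly positive, we have $w_{ij} \ge 0$, and the inequality in Equation~\ref{eq:graph} holds with room to spare for every neighboring pair $\{c_i,c_j\} \in \mathbb{N}$.

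Finally, I would invoke the cited theorem: because every pairwise term of $E^i$ satisfies the submodularity inequality and the remaining contributions are unary, $E^i$ is graph-representable, which yields the corollary. The only potential obstacle is confirming that no hidden higher-order or non-submodular pairwise interactions appear; this is avoided by noting that the temporal chain contributes only along the time axis at each fixed location (so the $\mathbb{M}^{t-1}, \mathbb{M}^{t-2}$ values act as constants in the current optimization) and that the RSS likelihood factorizes across locations by the per-location definition in Equation~\ref{eq:rss_likelihood_term}, leaving the Ising-style spatial prior as the unique pairwise term, for which submodularity is immediate.
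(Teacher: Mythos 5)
Your proposal is correct and follows essentially the same route as the paper: the paper's proof likewise observes that only the spatial prior of Equation~\ref{eq:spatial_prior_term} contributes pairwise terms, that the left-hand side of Equation~\ref{eq:graph} is zero because the pairwise cost vanishes when labels agree, and that the right-hand side terms are positive, so the regularity inequality holds. Your version merely spells out the supporting details (fixing $s^t$, $\mathbb{M}^{t-1}$, $\mathbb{M}^{t-2}$ and checking that the temporal and likelihood terms are unary) that the paper leaves implicit.
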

\begin{proof}
  The proof follows directly by mapping the terms of  Equation~\ref{eq:spatial_prior_term} to Equation~\ref{eq:graph} noting that the LHS of Equation~\ref{eq:graph} is zero in the DF tracking problem and the two RHS terms are positive.
\end{proof}

 The above corollary tells us that we can find a polynomial time efficient solution to the DF energy minimization problem using the binary graph-cut mapping. Figure~\ref{fig:graph} shows how our energy minimization problem can be mapped to a binary graph-cut problem. We construct a graph that has $n+ 2$ nodes, where $n$ nodes are the original discrete environment map locations and two additional nodes are added to represent the source and sink nodes. There are two types of edges. Those between the original discrete environment map locations (n-edges) and those between each node and the source and sink terminal nodes (t-edges). The edge weights are assigned in the following way to guarantee that the min-cut solution to this graph is equivalent to minimizing the energy function in Equation~\ref{eq:eng_min} \cite{Kolmogorov:2002}:
\begin{enumerate}
  \item The t-edge between the source and node $x$ is assigned a weight of $P(s^t | \alpha_x^t=0)+ P(\alpha_x^t=0|\alpha_x^{t-1}, \alpha_x^{t-2})$.
  \item The t-edge between node $x$ and the sink is assigned a weight of $P(s^t | \alpha_x^t=1)+       P(\alpha_x^t=1|\alpha_x^{t-1}, \alpha_x^{t-2})$.
  \item The n-edge $(x, y)$ between node $x$ and node $y$ is assigned a weight of $\frac{1 + e^{-\left\|P(s^t|\alpha_{x}^{t}=1)-P(s^t|\alpha_{y}^{t}=0)\right\|^2}}{2}$.
\end{enumerate}

\begin{theorem}
  The binary graph-cut solution on the constructed graph is a solution to the corresponding energy minimization problem in Equation \ref{eq:eng_min}.
\end{theorem}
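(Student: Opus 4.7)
The plan is to show that, under a natural correspondence between s-t cuts and environment maps, the cost of every cut equals the energy $E^{i}$ of Equation~\ref{eq:e_i} at the corresponding labeling up to an additive constant independent of the labeling; a minimum cut then yields a minimizer of Equation~\ref{eq:eng_min}.

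First I would fix a convention identifying each cut $C=\{S,T\}$ with an assignment $\mathbb{M}^t$: place $x\in S$ when $\alpha_x^t=0$ and $x\in T$ when $\alpha_x^t=1$ (choosing whichever orientation makes the stated t-edge weights play the role of the negative-log prior and likelihood terms that appear in the energy). Under this convention, the t-edge from the source into $x$ is severed exactly when $x\in T$, the t-edge from $x$ to the sink is severed exactly when $x\in S$, and an n-edge $(x,y)$ is severed exactly when $\alpha_x^t\neq\alpha_y^t$.

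Next I would sum the cut cost term by term and match it with Equation~\ref{eq:e_i}. At each location $x$, exactly one t-edge is cut and its weight reproduces the unary cost $-\log P(s^t\mid\alpha_x^t)-\log P(\alpha_x^t\mid\alpha_x^{t-1},\alpha_x^{t-2})$ assembled from the $U^{\textrm{SS}}$ and $V^{\textrm{Tm}}$ contributions for the chosen value of $\alpha_x^t$, so these t-edge contributions sum to $U^{\textrm{SS}}(\mathbb{M}^{t},s^t)+V^{\textrm{Tm}}(\mathbb{M}^{t},\mathbb{M}^{t-1},\mathbb{M}^{t-2})$ across all $n$ locations. Each n-edge weight equals the per-pair spatial penalty of Equation~\ref{eq:spatial_prior_term} and is paid only when its endpoints receive different labels, which exactly reconstructs $V^{\textrm{Sp}}(\mathbb{M}^{t},s^t)$. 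Adding the two contributions gives $c(S,T)=E^{i}+\textrm{const}$, so $\arg\min_{\{S,T\}} c(S,T)=\arg\min_{\mathbb{M}^t} E^{i}=\hat{\mathbb{M}}^{t}$.

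Finally I would invoke the preceding Corollary together with the Kolmogorov--Zabih construction \cite{Kolmogorov:2002} to certify that the reduction is valid (the regularity inequality of Equation~\ref{eq:graph} is precisely the submodularity that guarantees the n-edge weights are nonnegative and that the minimum cut can represent the pairwise penalty), and Ford--Fulkerson max-flow/min-cut duality together with any polynomial-time max-flow algorithm to recover the optimum in polynomial time. The main obstacle is the careful bookkeeping in the matching step: one must align each edge weight's contribution with the corresponding energy term under a single, consistent $0/1$ labeling convention, and check that the n-edge weights vanish when neighbors agree while matching the Ising-style penalty when they disagree. Everything else follows directly from the graph-cut machinery already assembled.
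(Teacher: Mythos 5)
Your proposal takes essentially the same route as the paper's appendix proof: identify each $s$-$t$ cut with an assignment of the $\alpha_x^t$, observe that exactly one t-edge per node is severed and contributes the unary (likelihood plus temporal) cost while each n-edge contributes the pairwise spatial penalty only when its endpoints disagree, and conclude that the minimum cut minimizes $E^i$. You are in fact slightly more careful than the paper in pinning down the labeling convention, the additive constant, and the role of the regularity corollary, so the argument stands as a faithful (and marginally tighter) version of the published proof.
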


\begin{proof}
The proof is in the appendix.
\end{proof}

Any node connected to the source (sink) node after the cut is considered inactive (active).

\subsection{Computational Complexity}
The binary graph-cut algorithm requires $O(n^3)$ operations, where $n$ is the number of fingerprint locations. However, we use the algorithm in \cite{Boykov:2001} as it provides an iterative fast algorithm. Although the algorithm has the same complexity in the worst case, its average complexity is $O(n)$. This has been confirmed in our experiments.

\subsection{Discussion}
Using the proposed technique, we could reduce the training  complexity from $O(2^n)$ to $O(n)$. This is a significant reduction in the calibration overhead which turns the multi-entity tracking problem to a feasible problem.

The proposed framework also treats the detection and tracking problem in a homogenous manner. In particular, detection can be regarded as a special case of the system, where a non-zero estimate of the number of entities  is equivalent to a detection event.

\section{Multi-Entity Detection and Estimation}
\label{sec:cluster}
 The output of the binary graph-cut operation is a set of candidate locations. However, these locations cannot be used directly as a human presence at a location typically affects the signal strength at \emph{more than one} neighboring location (Figure \ref{fig:temporal}) leading to overestimating the actual number of humans and their locations. This effect on neighboring locations decreases as we move away from the actual human location. Therefore, the \textit{Multi-entity Detection and Estimation Module} applies clustering to the output of the binary graph-cut algorithm, such that the number of output clusters determines the number of entities and the center of mass of each cluster gives the coordinates of human corresponding to this cluster. This not only solves the problem of overestimating the number of entities, but also in locating the entities in the continuous space by the weighted averaging of all the samples within a cluster.
  To further enhance accuracy, we apply clustering to the last $w$ environment maps by merging them into one map.

\subsection{Approach}
We used a hierarchical clustering technique (Figure \ref{fig:cluster}) as it gives us an intuitive means to estimate the number of clusters. In particular, leaf nodes represent individual candidates. Each internal node represents a possible cluster. As we go up in the tree, clusters are combined to form a bigger cluster using Euclidean distance between clusters centers as a similarity measure. The root of the tree corresponds to one cluster that contains the entire set of candidate nodes.
Starting from the root of the tree, if the degree of inconsistency between two clusters is high, based on a parameter $r$, we split them as two separate clusters. This process is repeated recursively for each of the split clusters until the degree of inconsistency is below $r$. The final number of clusters represents the estimated number of humans and the center of mass of each cluster is the estimated human location.

\subsection{Clustering Complexity}
The hierarchical clustering requires $O(c^3)$ operations, where $c$ is the number of candidate locations. Typically, $c$ is $<<n$. Therefore, clustering has a low overhead. We quantify this effect in Section~\ref{sec:eval}.

\begin{figure}
\centering
\includegraphics[width=3.0 in]{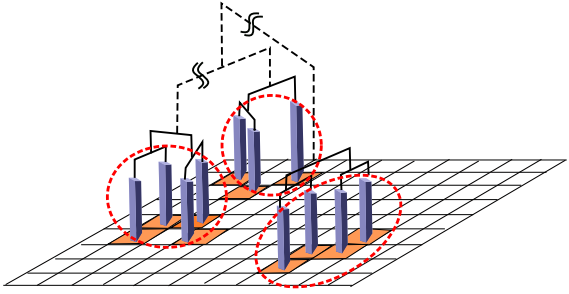}
\caption{An example dendrogram showing the hierarchical clustering inconsistency. Dotted lines represent split clusters. The figure shows three clusters corresponding to three entities in the area of interest.}
\label{fig:cluster}
\end{figure}

\section{Performance Evaluation}
\label{sec:eval}

In this section, we analyze the performance of \textit{Spot} and compare it to a deterministic \cite{Seifeldin:Nuzzer} and a probabilistic \cite{Seifeldin:Nuzzer_Report} state-of-the-art DF WLAN localization systems. We start by  describing the experimental setup and data collection. Then, we analyze the effect of different parameters on the system performance.

\subsection{Testbeds and Data Collection}
We evaluate our system in two different testbeds (Figure~\ref{fig:testbeds}). The first testbed covers a residential apartment with an area of $114 m^2$ (about 1228 sq. ft.) while the second testbed represents an office building with an area of $130 m^2$ (about 1400 sq. ft.). The two testbeds were covered by TP-link TL-WA500G APs and D-Link Airplus G+DWL-650 wireless NICs.

For data collection, we used a sampling rate of one hertz. We had six RSS data streams for both testbeds. A total of 25 fingerprint locations, uniformally distributed over the testbed, are sampled for both testbeds. An \textbf{\emph{independent test set}} at 17 (22) test locations for the first (second) testbed, was collected at different times and by different persons.

We give the details of the results of the first testbed and summarize the results of the second. Figure 10
shows an example of the output of the system.

\begin{figure}[!t]
\centering
\subfigure[Testbed 1]{\includegraphics[width=1.55in]{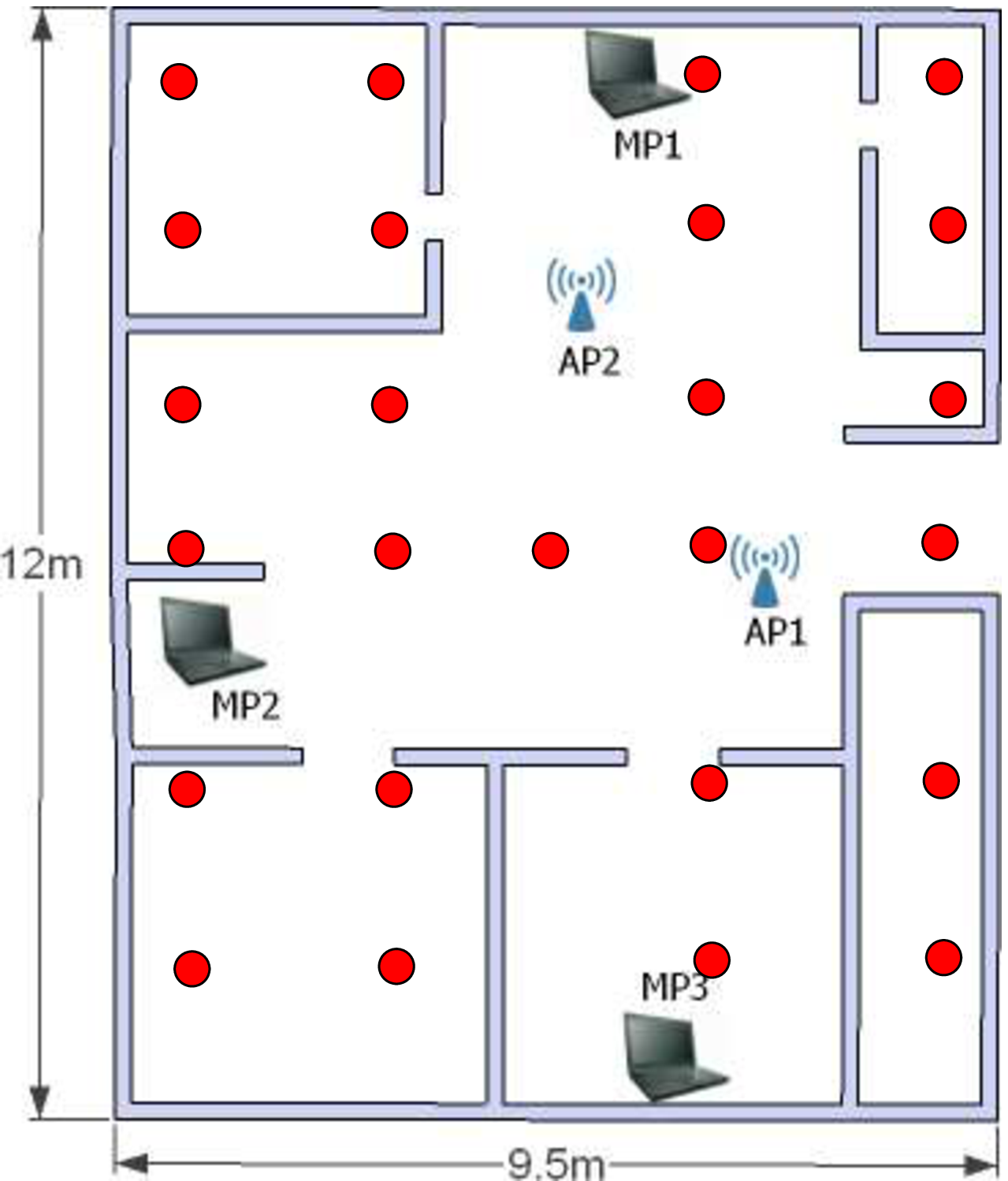}
\label{fig:homefloor_layout}
}
\subfigure[Testbed 2]{
\includegraphics[width=1.55in]{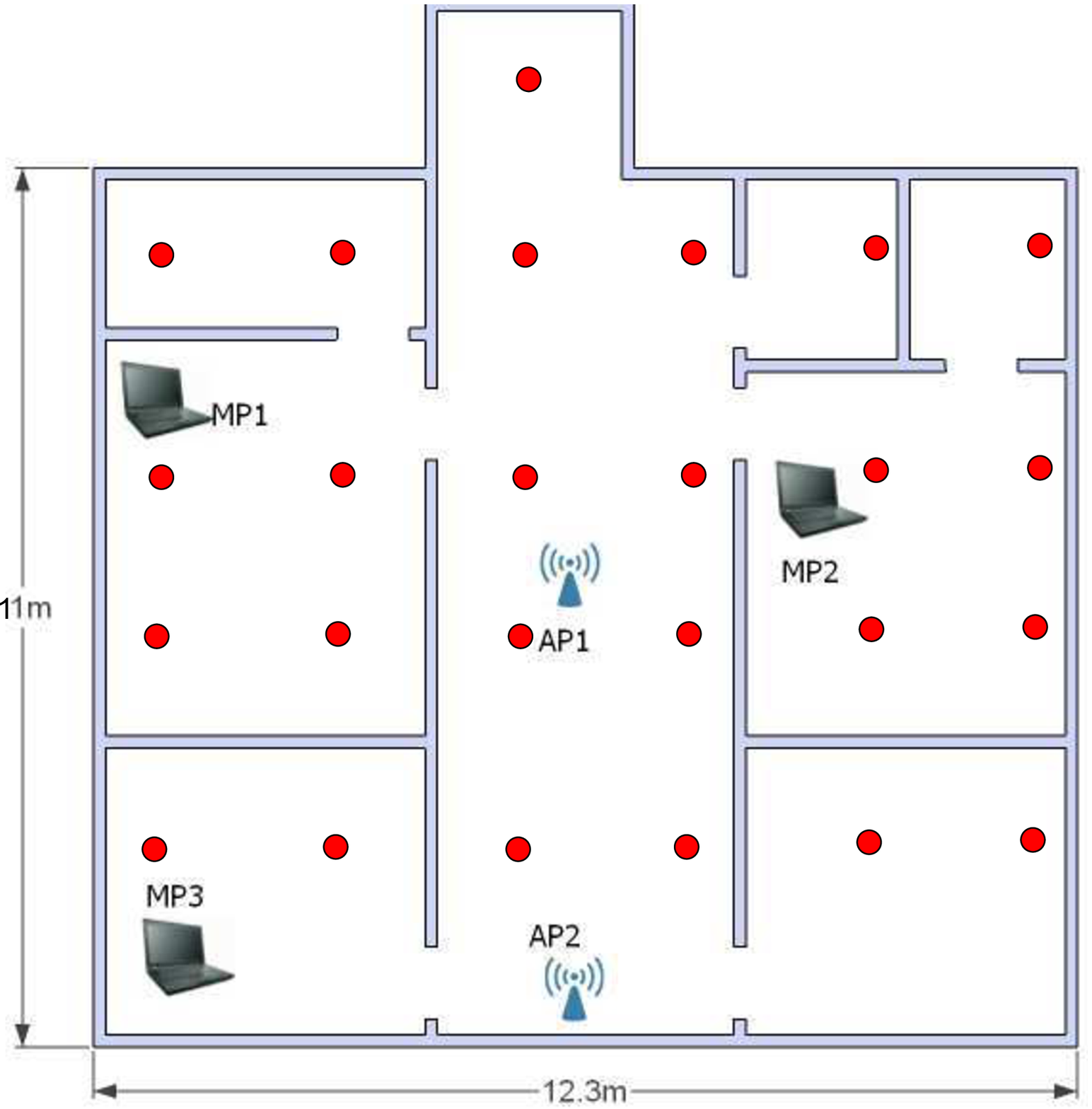}
\label{fig:villa_layout}
}
\caption{Experimental testbeds.}
\label{fig:testbeds}
\end{figure}

\begin{table}
    \centering
    \caption{Default parameters values.}
    \begin{tabular}{|c|c|l|} \hline
     Parameter & Default value & Meaning\\ \hline \hline
        $k$ & 6  & Num. of used streams\\ \hline
        $n$ & 25  & Num. of FP locations\\ \hline
        $w$ & 13  & Window size\\ \hline
        $o$ & 2  & HMM order\\ \hline
        $r$ & 0.25  & Clust. inconsistency thr.\\ \hline
    \end{tabular}
    \label{tab:default}
\end{table}
\subsection{Parameters Effect}
 In this section, we study the effect of changing the system parameters on the performance of \textit{Spot}. The average distance error is used as the main metric where the error is calculated as the difference between the estimated location and the closest ground truth location (for multiple-entities). We present two versions of the average distance error relying on the level of details needed. If determining the zone the person is standing in the main target, then the average distance error is calculated based on the centers of estimated and actual zones and we call it {\bf zones-based difference} (Figures 11-16). On the other hand, if higher level of details is required, i.e. the exact person location, we calculate the difference based on the coordinates of the ground truth and the estimated location and we call it {\bf locations-based difference} (Figures 19-24).

To calculate the distance error for multiple entities, we use the Euclidean distance between the estimated zone/location of each entity and the closest fingerprint zone/location\footnote{If the estimated number of entities is less than the actual number of entities, we use the testbed center as the ground truth.}.

 Table~\ref{tab:default} shows the default values of the different parameters.

\subsubsection{Window size ($w$)}

Figures~11 and 19 
show that increasing the window size enhances the system accuracy. This is due to leveraging more information. This, however, increases the latency of the location estimation. Therefore,  an application should balance the latency-accuracy tradeoff based on its requirements.

\subsubsection{Clustering inconsistency threshold ($r$)}

Figures~12 and 20 
show that for small values of $r$, i.e. $r < 0.15$, the system tends to generate one cluster, regardless of the number of entities in the area of interest, underestimating the true number of humans. As $r$ approaches its maximum value, i.e. one, the system generates a lot of clusters, overestimating the actual number of humans. This quantifies the advantage of the clustering module. An optimal value for $r$ occurs around $0.25$.

\subsubsection{Fingerprint density ($n$)}

Figures~13 and 21 
show that increasing the fingerprint density increases accuracy. As small as 15 locations, corresponding to a density of one FP location every $7.6 m^2$, is enough to achieve the best accuracy. Increasing the density beyond this value does not significantly enhance the accuracy.

\subsubsection{Number of streams ($k$)}

Figures~14 and 22 
show that increasing the number of streams increases the system accuracy, especially for a higher number of entities, to a certain limit after which the performance saturates. As few as four streams can achieve less than 1.6 meter overall accuracy for the zone-based difference.
\subsubsection{HMM Order ($o$)}
\label{sec:res_hmm}

Figures~15 and 23 
show that a second order model enhances performance over lower order models. In some cases, a third order model performs worse than a second order mode due to over-training. This justifies the use of a second order HMM.

\subsection{Comparison with Other DF Systems}
\subsubsection{Accuracy}

\begin{table*}[!t]
    \centering
    \caption{Performance summary for the different systems under the two testbeds using the zones-based difference as a metric. Number between parenthesis represent percentage of Spot-One entity advantage. $c$ is the number of candidate locations after the graph-cut phase in Spot and first phase of probabilistic Nuzzer. $c$ is typically $<< n$.}
   \small
 \begin{tabular}{|l||p{1.2cm}|p{1.2cm}|p{1.2cm}||p{1.25cm}|p{1.2cm}|p{1.2cm}||c|} \hline
       & \multicolumn{3}{c||}{Testbed 1}& \multicolumn{3}{c||}{Testbed 2}& \\\cline{2-7}
          & Median &Average& Running &   Median &Average&Running& \\
       \raisebox{3ex}{System} & \raisebox{1.5ex}{error} & \raisebox{1.5ex}{error}& \raisebox{1.5ex}{time}&\raisebox{2ex}{error}  &\raisebox{1.5ex}{error}& \raisebox{1.5ex}{time}&  \raisebox{3ex}{Complexity} \\ \hline \hline
         Spot-One ent.&  1m & 1.43m&1.95ms & 1.1m & 1.25m&1.9ms &\\ \cline{1-7}
         Spot-Multi-ent.&  1.75m (75\%)& 2.15m (50.3\%)& 2.56ms (31.4\%)& 0.85m \mbox{(-22.7\%)}& 1.72m (37.6\%)& 2.4ms (26.3\%)&  \raisebox{1ex}{$O(n.m+c^3)$}\\ \hline \hline
       Prob. Nuzzer \cite{Seifeldin:Nuzzer_Report}& 2.3m (130\%)& 2.66m (86\%)&3.53ms (81.35\%)&  1.5m (36.4\%)&  1.64m (31.2\%)&2.85ms (49.84\%)& $O(n.m+n.c)$\\ \hline
       Det. Nuzzer \cite{Seifeldin:Nuzzer}&  3m (270\%)& 3.54m (147.5\%)&1.78ms \mbox{(-8.4\%)}& 2.7m (145.5\%)&  3.12m (149.6\%)&1.92ms (1.1\%) & $O(n.m)$\\ \hline    \end{tabular}
\label{tab:CDFs_1}
 \end{table*}

Figures~16 and 24 
show the CDF of distance error for the different techniques (\textbf{\emph{note that current state-of-the-art supports only one entity}}). Tables~\ref{tab:CDFs_1} and ~\ref{tab:CDFs_2} summarize the results for the two testbeds.
The results show that \textit{Spot} has the best performance under the two testbeds with an enhancement of at least 36\% in median error over the best state-of-the-art techniques for zones-based difference and at least 15.49\% in average error for locations-based difference. All techniques perform better in Testbed 2 due to the closer separation of training point in Testbed 2.

Figure~17 
also shows that \textit{Spot} can estimate the number of entities in the area of interest with at most one difference error. This can further be enhanced as described in Section~\ref{sec:discuss}.
\subsubsection{Running Time}

Figure~18 
and Table~\ref{tab:CDFs_1} show the running time for the different techniques and Spot components. The results show that the overall Spot operations take less $1.9 ms$ per location estimate for both testbeds. The clustering component consumes the largest time, followed by the min-cut algorithm, and finally calculating the probabilities.

Although Table~\ref{tab:CDFs_1} shows that all algorithms have the same complexity (as $c<<n$), the running time does differ. This is due to the proportionality constants for the small $n$ and $m$ values in our experiment.
 Spot takes slightly higher running time than the deterministic technique (less than 4.75\% on average for both testbeds). However, it significantly outperforms the probabilistic Nuzzer technique, with 65\% enhancement on average in running time. This highlights that Spot significant gain in accuracy and reduction in training overhead comes at a negligible increase in running time.

\begin{table*}[!t]
    \begin{tabular}{p{0.333\textwidth} p{0.333\textwidth} p{0.333\textwidth}}
    \vspace{-3.9cm}\includegraphics[height=1.41in, width=0.32\textwidth]{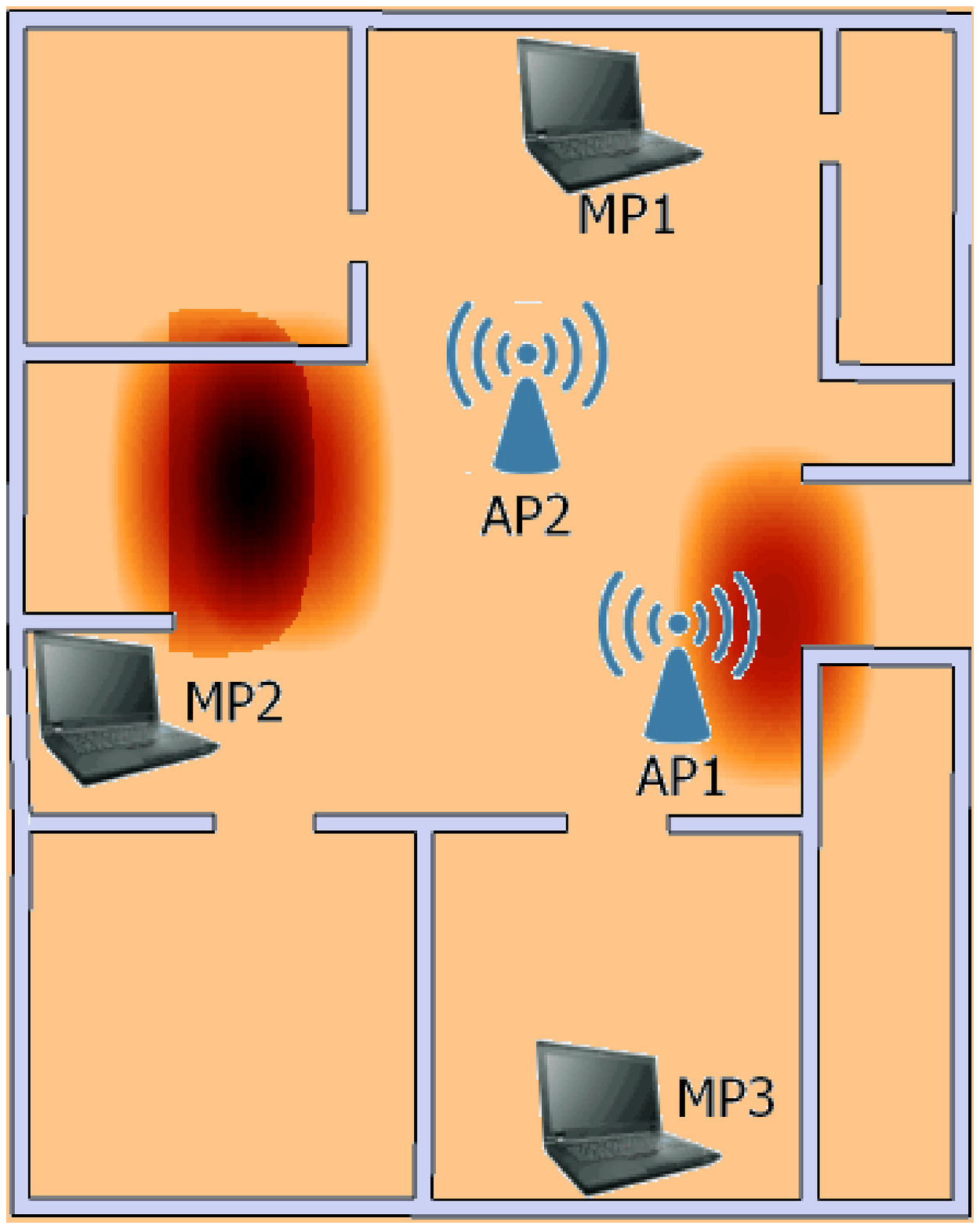}&
    \includegraphics[width=0.33\textwidth]{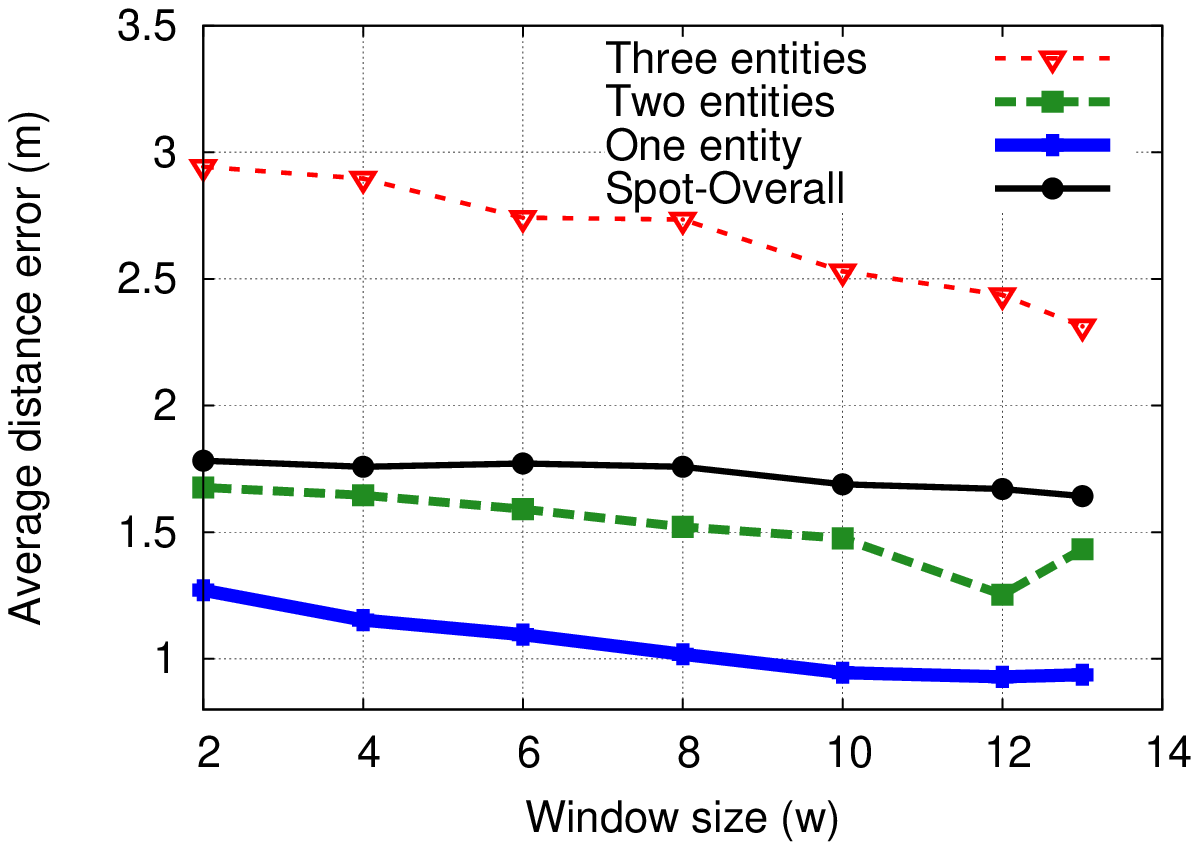}&
    \includegraphics[width=0.33\textwidth]{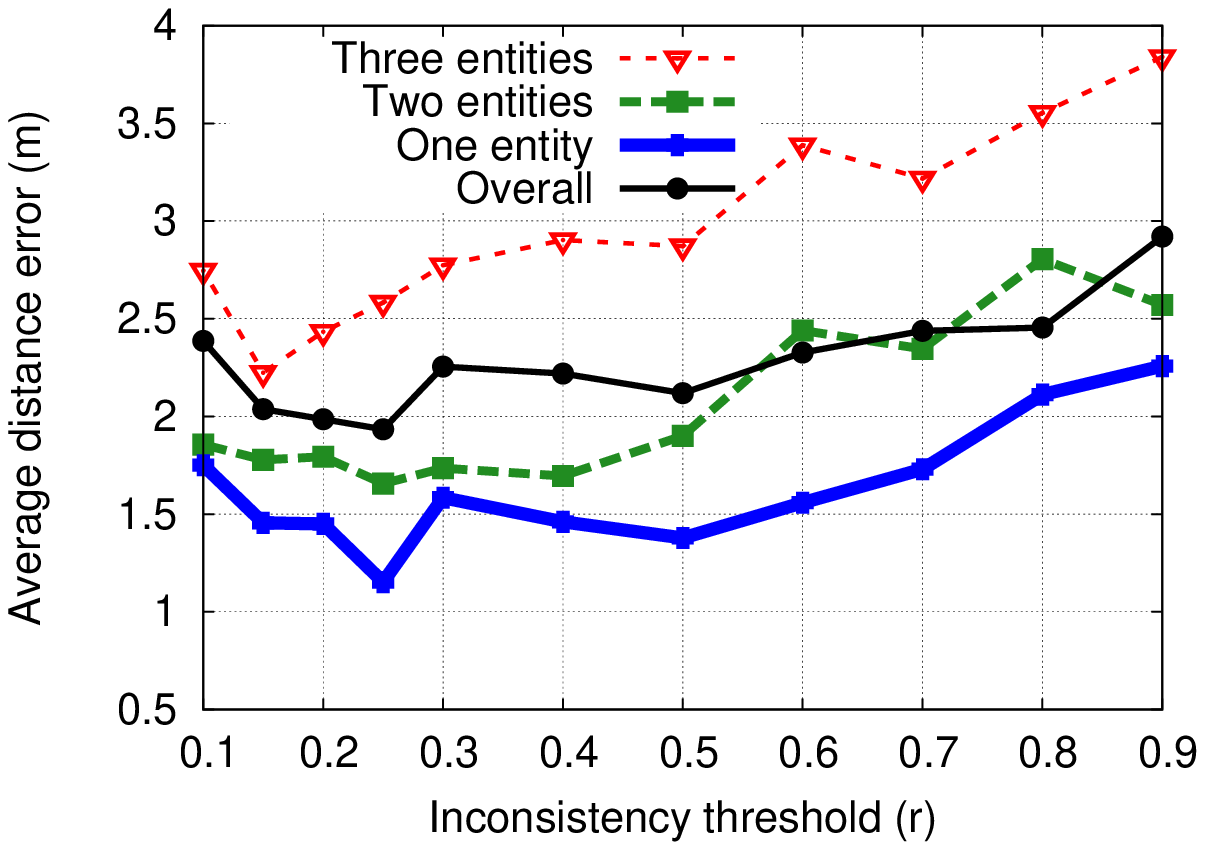}\\
    Fig. 10: A heatmap highlighting the system output. Two close entities are present on the left and another entity is present on the right. & Fig. 11: Effect of changing the window size ($w$) on accuracy. (zones-based difference)& Fig. 12: Effect of changing the clustering inconsistency threshold ($r$) on accuracy.(zones-based difference)\\

    \includegraphics[width=0.33\textwidth]{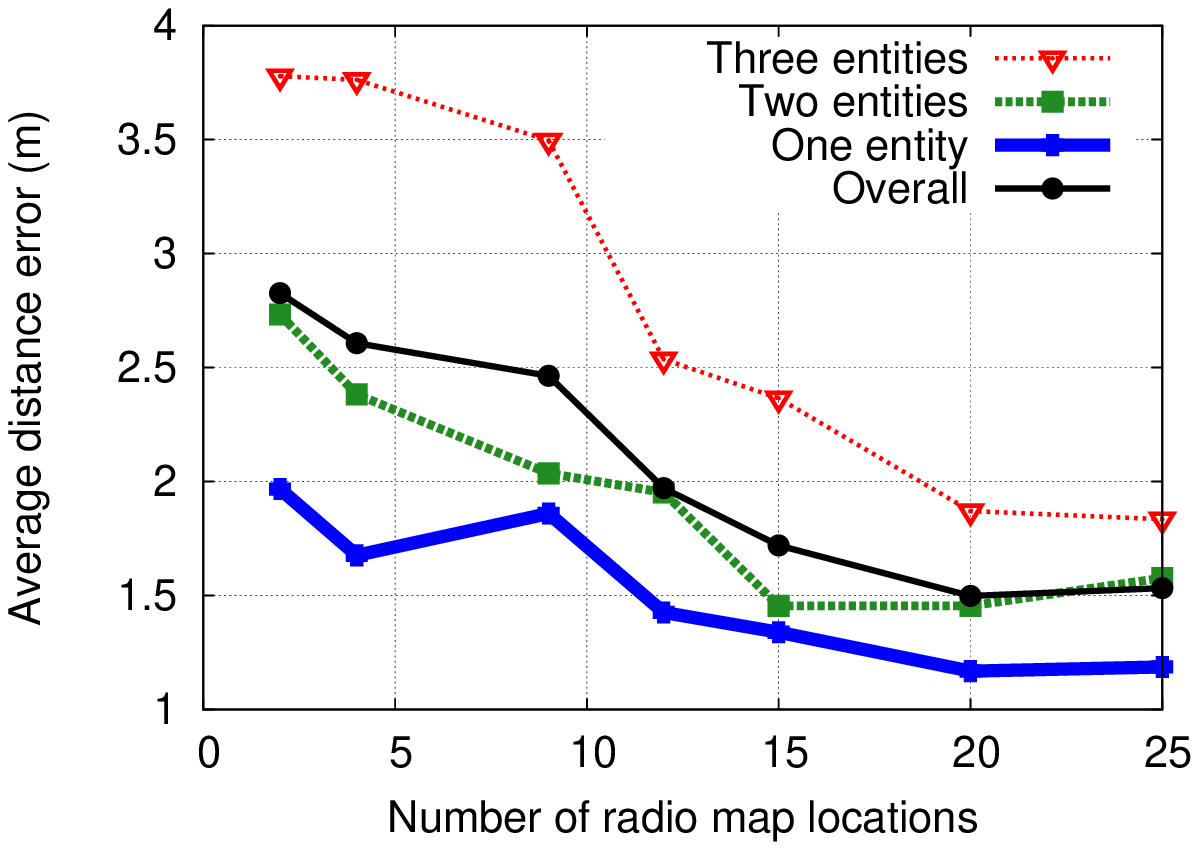} &
    \includegraphics[width=0.33\textwidth]{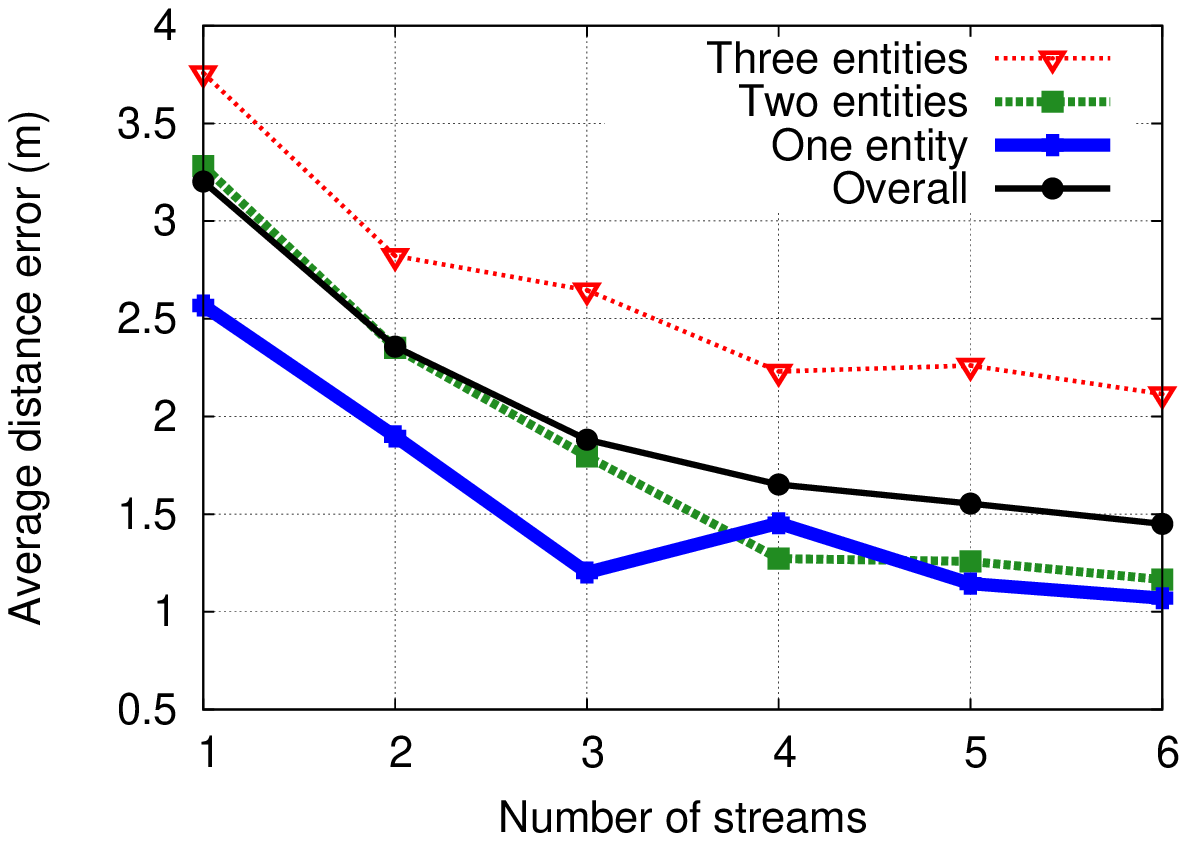} &
    \includegraphics[width=0.33\textwidth]{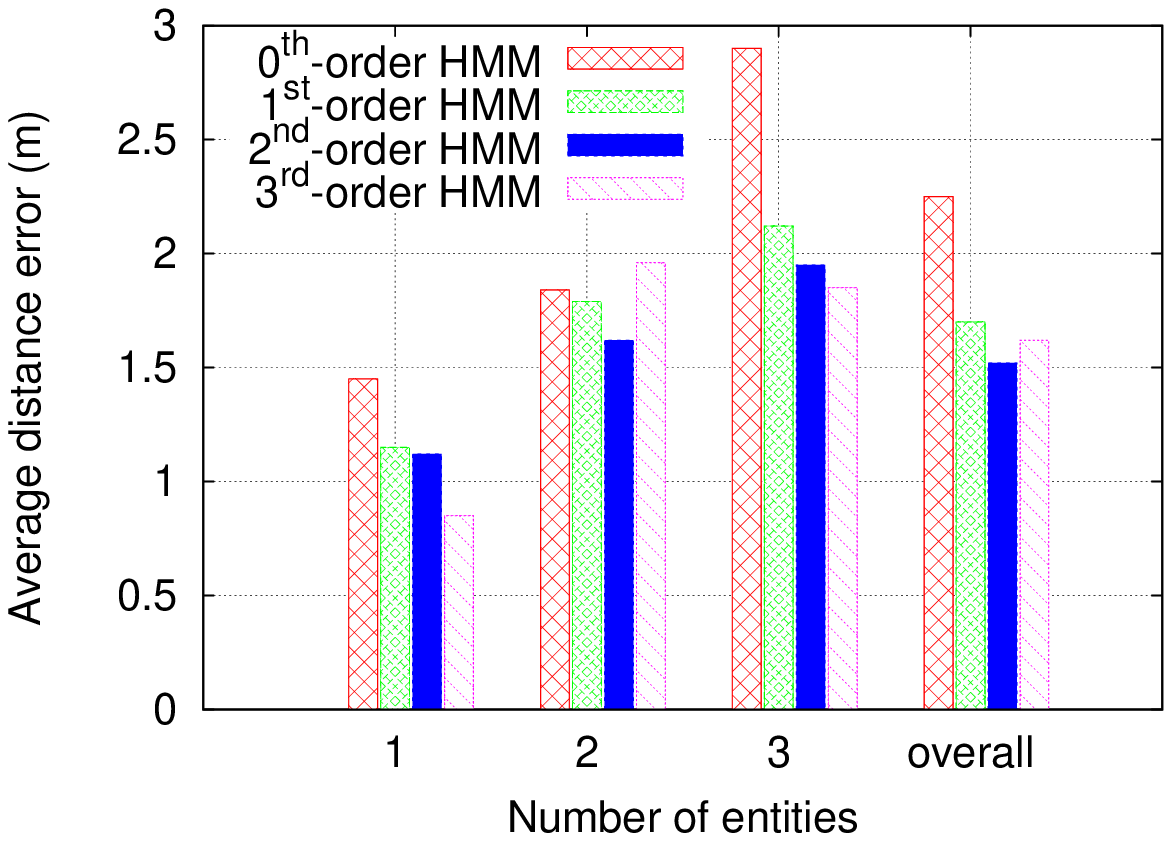}\\

    Fig. 13: Effect of changing the fingerprint density ($n$) on accuracy. (zones-based difference)& Fig. 14: Effect of changing the number of streams ($k$) on accuracy. (zones-based difference)& Fig. 15: Effect of changing the HMM order ($o$) on accuracy. (zones-based difference)\\

    \includegraphics[width=0.33\textwidth]{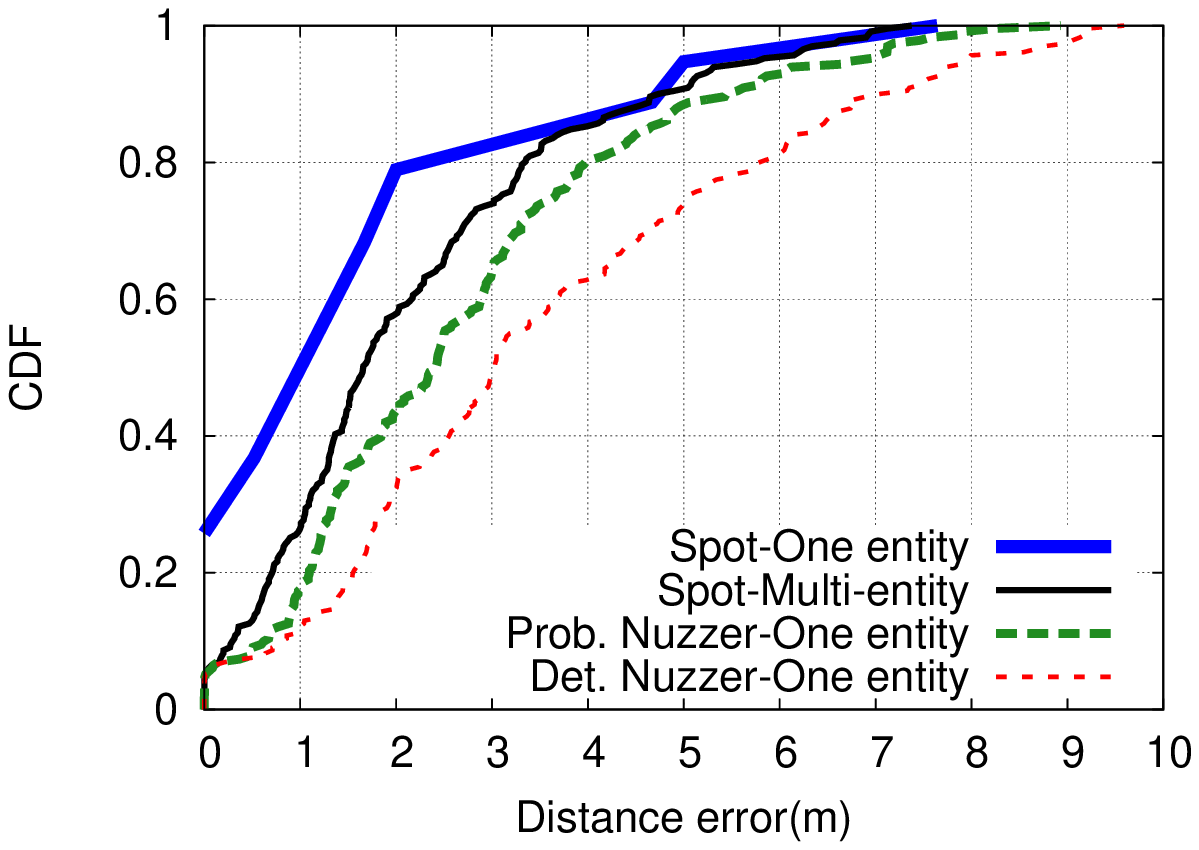}&
    \includegraphics[width=0.33\textwidth]{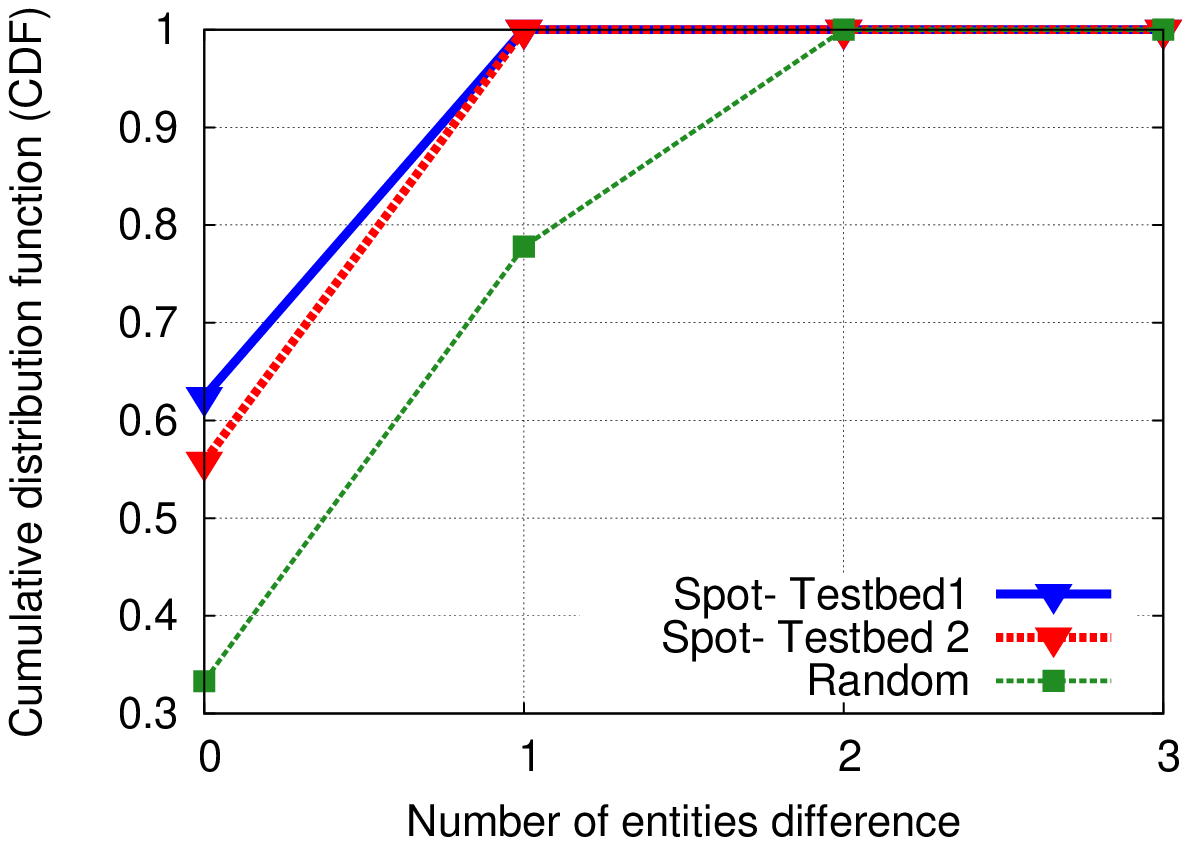}&
    \includegraphics[width=0.33\textwidth]{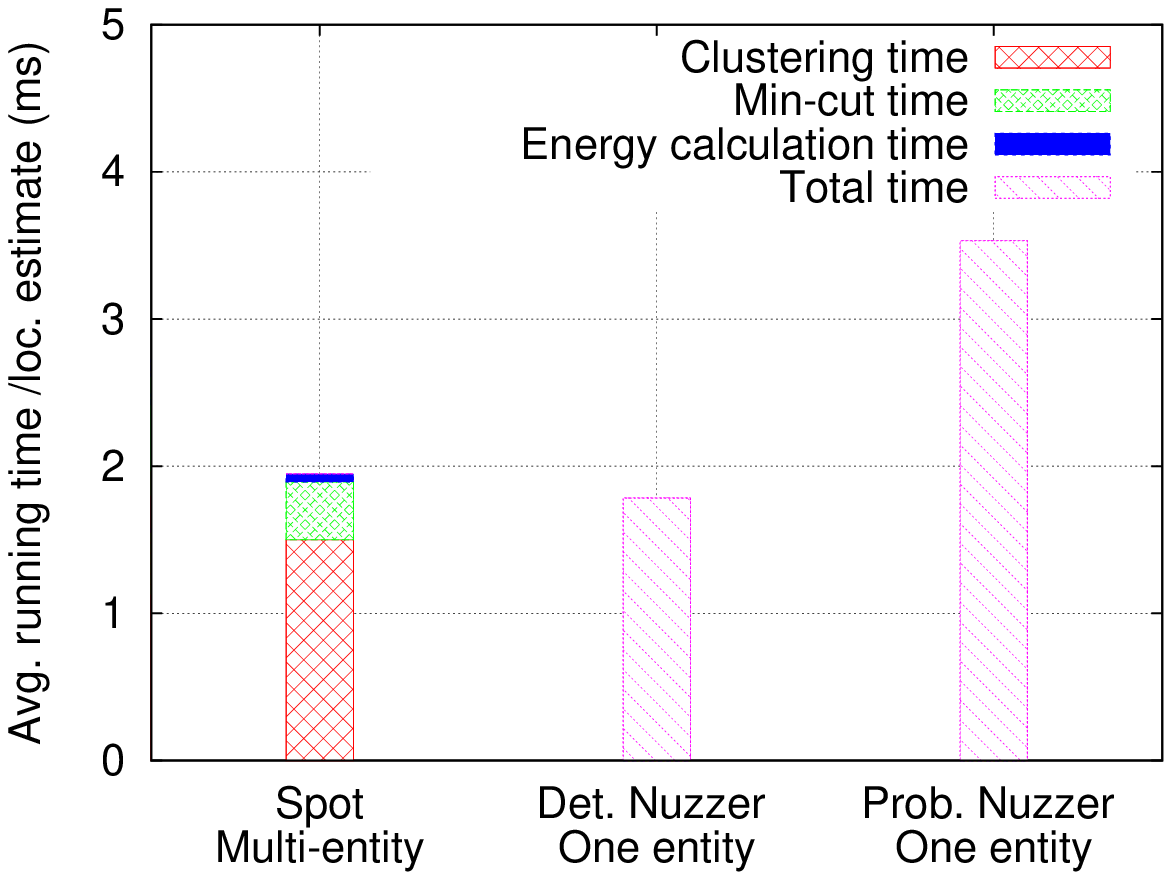}\\
    Fig. 16: CDF of distance error for Testbed 1. (zones-based difference)& Fig. 17: CDF of num. of entities estimation error. A random estimator is a baseline that presents a lower bound on performance. & Fig. 18: Running time for the different components of Spot and a comparison with other systems running time.\\

    \end{tabular}
\end{table*}

\begin{table*}[!t]
    \centering
    \caption{Performance summary for the different systems under the two testbeds using locations-based difference as a metric. Number between parenthesis represent percentage of Spot-One entity advantage. $c$ is the number of candidate locations after the graph-cut phase in Spot and first phase of probabilistic Nuzzer. $c$ is typically $<< n$. }
   \small
 \begin{tabular}{|l||p{1.2cm}|p{1.2cm}|p{1.2cm}||p{1.25cm}|p{1.2cm}|p{1.2cm}||c|} \hline
       & \multicolumn{3}{c||}{Testbed 1}& \multicolumn{3}{c||}{Testbed 2}& \\\cline{2-7}
          & Median &Average& Running &   Median &Average&Running& \\
       \raisebox{3ex}{System} & \raisebox{1.5ex}{error} & \raisebox{1.5ex}{error}& \raisebox{1.5ex}{time}&\raisebox{2ex}{error}  &\raisebox{1.5ex}{error}& \raisebox{1.5ex}{time}&  \raisebox{3ex}{Complexity} \\ \hline \hline
         Spot-One ent.& 1.28  m & 1.52 m&1.95ms &  1.36m & 1.42m&1.9ms &\\ \cline{1-7}
         Spot-Multi-ent.& 2.1m (64\%)& 2.26m (48.68\%)& 2.56ms (31.4\%)& 1.22m \mbox{(-10.2\%)}& 1.94m (36.61\%)& 2.4ms (26.3\%)&  \raisebox{1ex}{$O(n.m+c^3)$}\\ \hline \hline
       Prob. Nuzzer \cite{Seifeldin:Nuzzer_Report}& 2.3m (79.68\%)& 2.66m (75\%)&3.53ms (81.35\%)&  1.5m (10.29\%)&  1.64m (15.49\%)&2.85ms (49.84\%)& $O(n.m+n.c)$\\ \hline
       Det. Nuzzer \cite{Seifeldin:Nuzzer}&  3m (134.37\%)& 3.54m (132.8\%)&1.78ms \mbox{(-8.4\%)}& 2.7m (98.52\%)&  3.12m (119.71\%)&1.92ms (1.1\%) & $O(n.m)$\\ \hline    \end{tabular}
\label{tab:CDFs_2}
 \end{table*}

\begin{table*}[!t]
    \begin{tabular}{p{0.333\textwidth} p{0.333\textwidth} p{0.333\textwidth}}
    \vspace{-3.9cm} \includegraphics[width=0.33\textwidth]{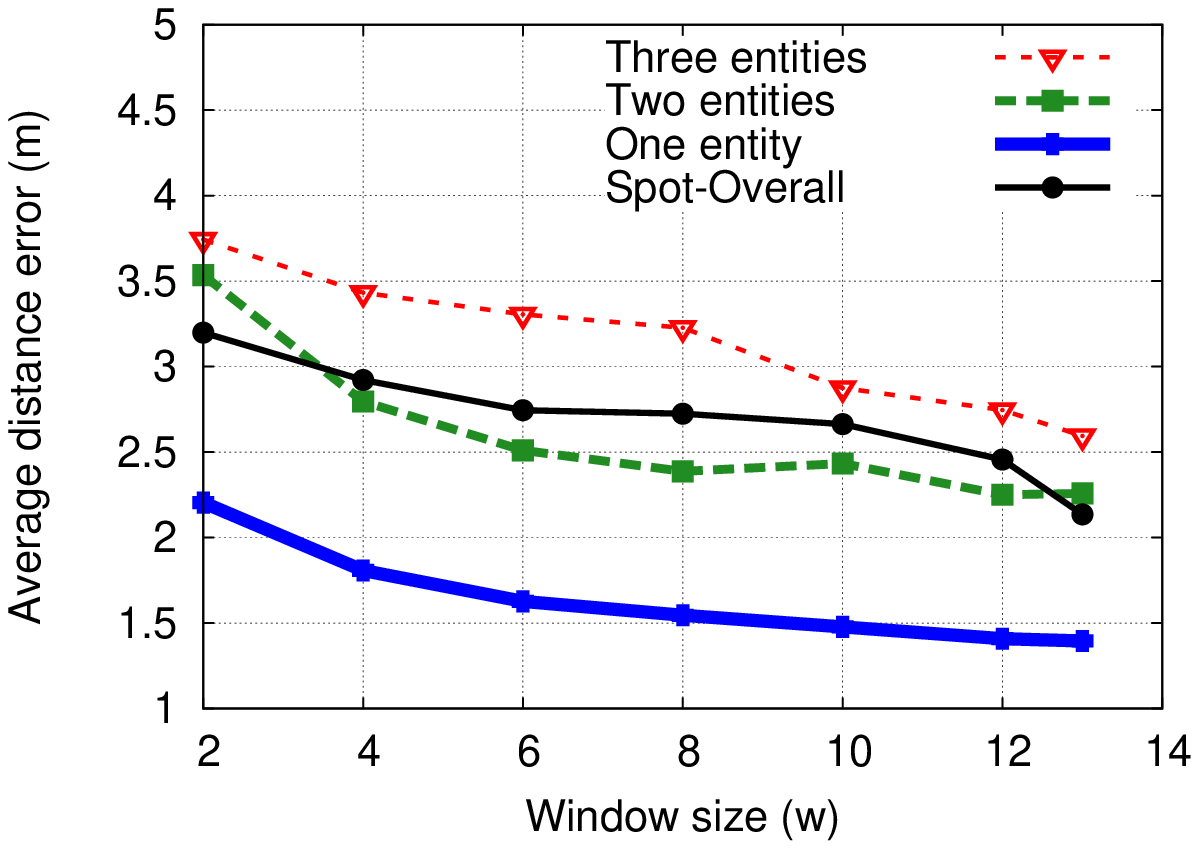}&
    \includegraphics[width=0.33\textwidth]{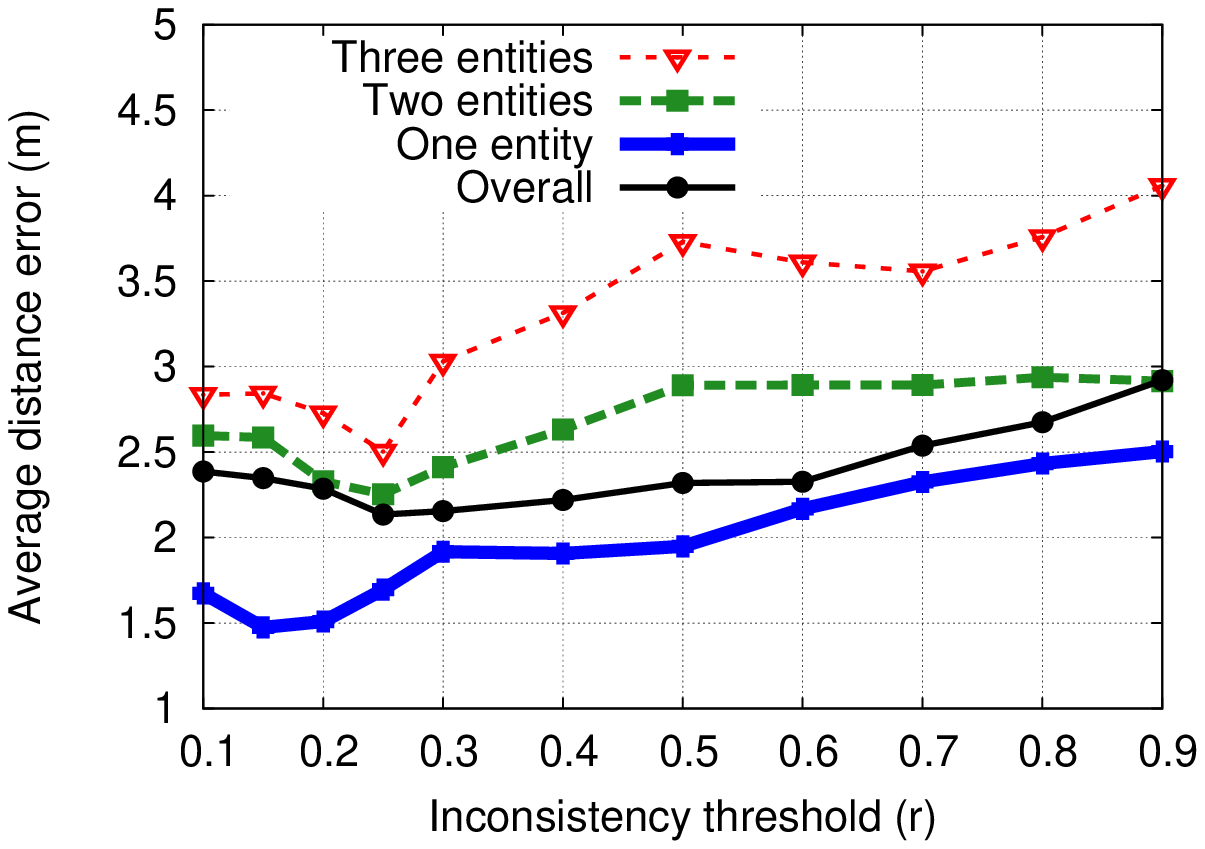} &
    \includegraphics[width=0.33\textwidth]{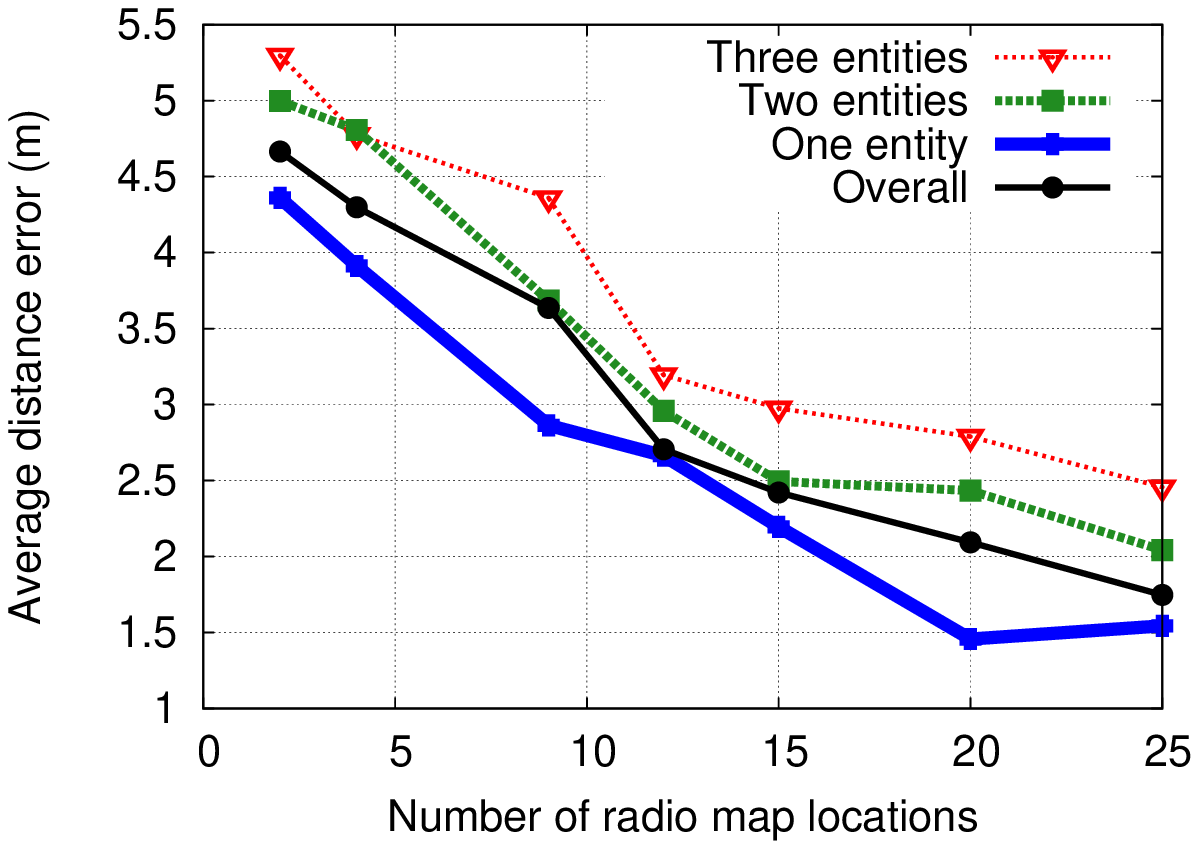} \\
    Fig. 19: Effect of changing the window size ($w$) on accuracy. (locations-based difference)& Fig. 20: Effect of changing the clustering inconsistency threshold ($r$) on accuracy. (locations-based difference)& Fig. 21: Effect of changing the fingerprint density ($n$) on accuracy. (locations-based difference)\\

    \includegraphics[width=0.33\textwidth]{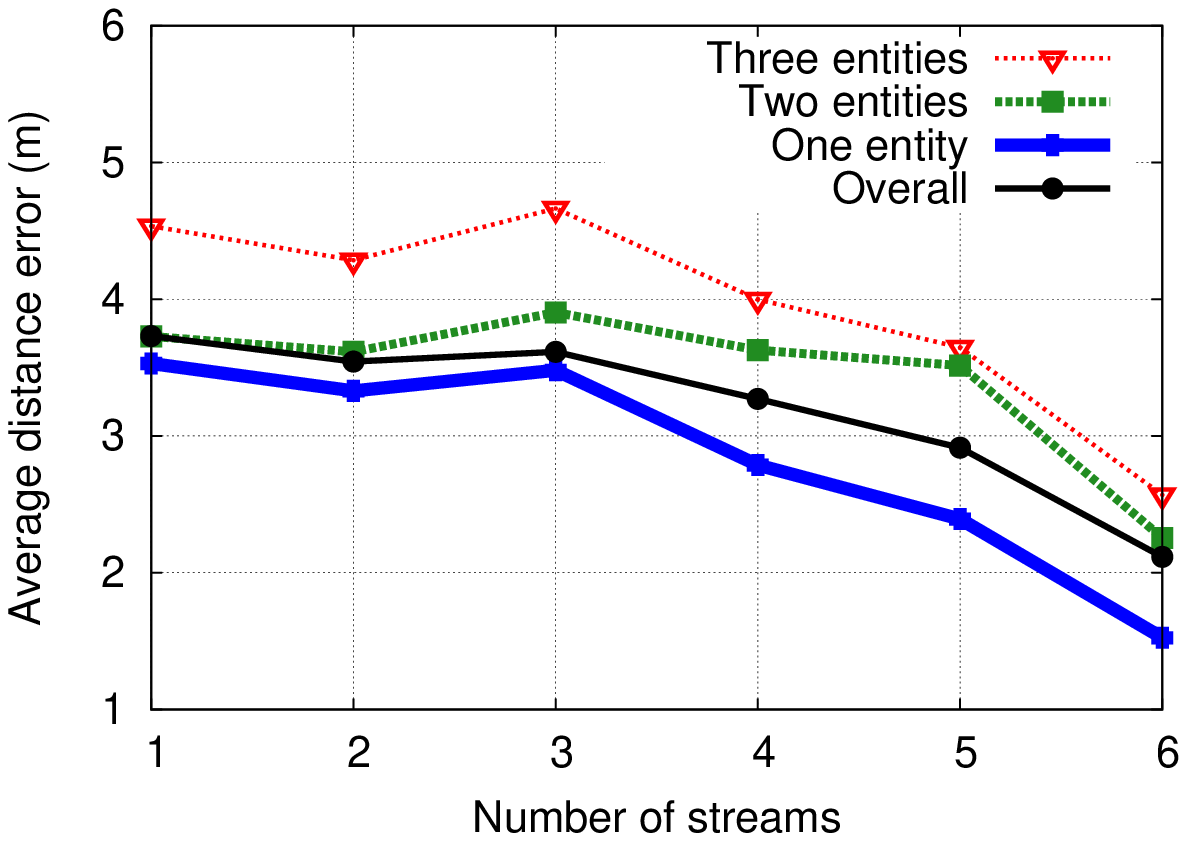} &
    \includegraphics[width=0.33\textwidth]{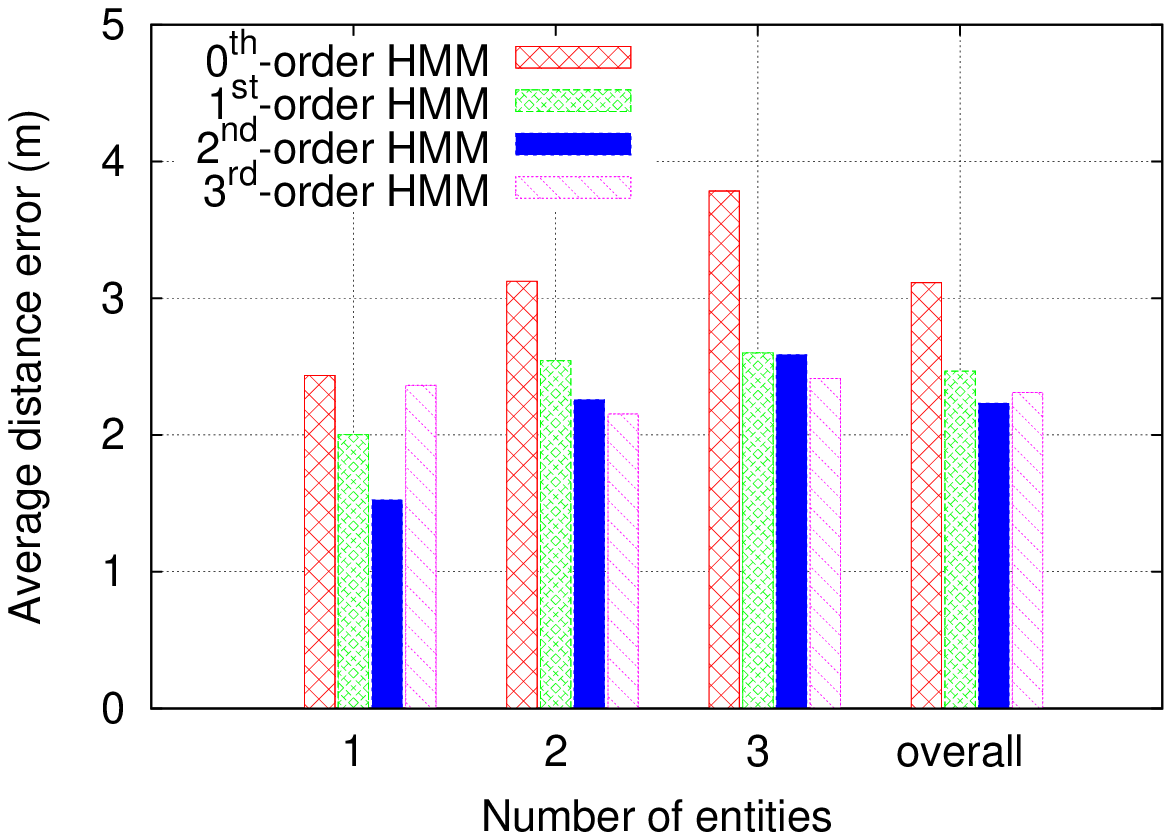} &
    \includegraphics[width=0.33\textwidth]{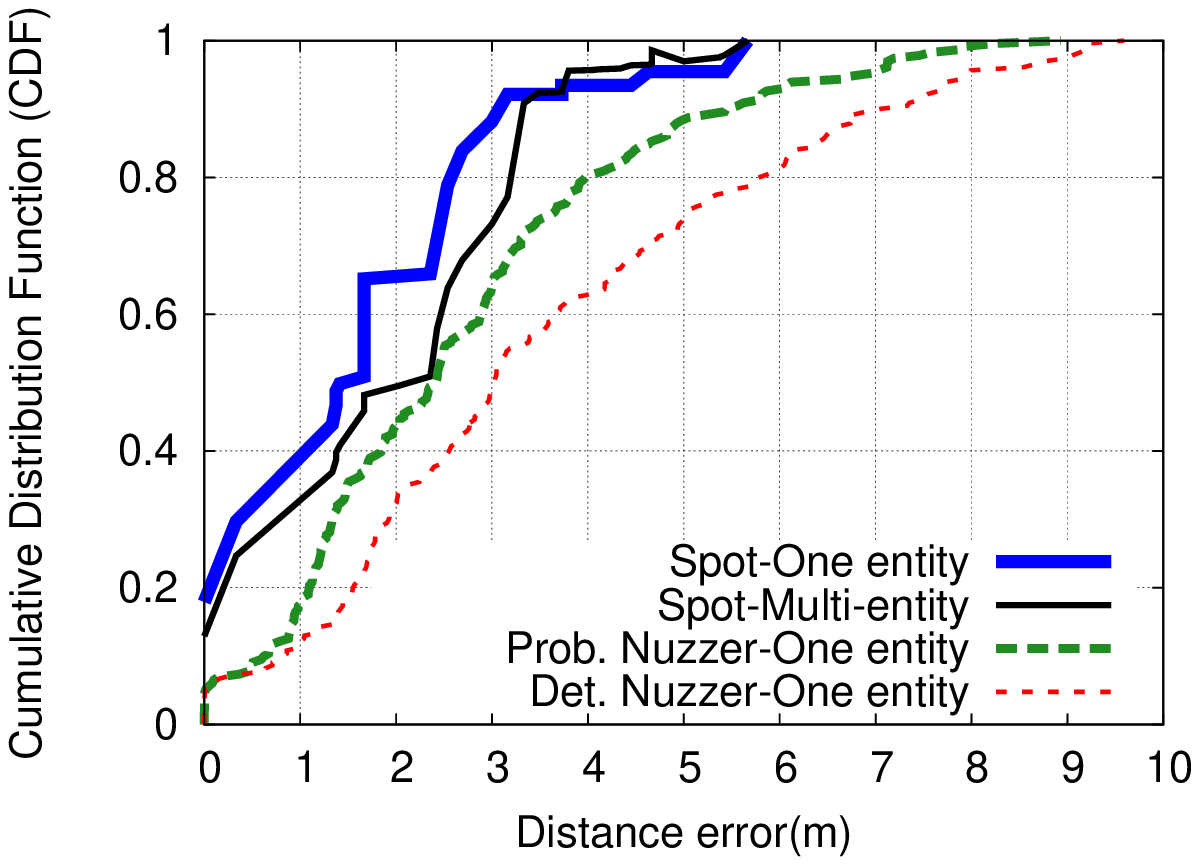}\\

    Fig. 22: Effect of changing the number of streams ($k$) on accuracy. (locations-based difference)& Fig. 23: Effect of changing the HMM order ($o$) on accuracy. (locations-based difference)& Fig. 24: CDF of distance error for Testbed 1. (locations-based difference)\\

    \end{tabular}
\end{table*}

\section{Discussion}
\label{sec:discuss}
In this section, we discuss different aspects of \textit{Spot}.

\subsection{Path Training}
Using the proposed framework, we could reduce the training complexity from $O(2^n)$ to $O(n)$. This is a significant reduction in the calibration overhead which turns the multi-entity tracking problem to a feasible problem. However, there is still some effort in calibrating the area of interest as the user has to stand at each location for a certain time. One possibility to reduce this overhead is to use path-based training, where a user continuously moves between two points and samples are collected along the path. This continuous calibration reduces the overhead, but provides less samples. Multiple passes around the area of interest can be used to increase the number of available samples along with density interpolation between adjacent locations. Further experiments need to be performed though to asses the tradeoffs of this technique.

\subsection{Identification}
Although we can track multiple entities in the area of interest, identifying these entities remains an open problem. This identification includes knowing the entities' physical identity (e.g. its name) or virtual identity, i.e. associating a unique ID to the detected entity. This entity labeling problem is well known in other fields, such as computer vision \cite{BerclazFTF11}. The entities movement history and trajectories can be used to detect these virtual identities.\\

\subsection{Number of Entities History Model}
\textit{Spot} can correctly estimate the number of entities with high accuracy. This can be further enhanced based on adding constraints for the temporal smoothness of the number of entities. In other words, outliers in estimating the number of entities can be detected based on the history of the detected number of entities. This can handle cases such as clusters merging and splitting.

\begin{table*}
    \centering
    \caption{Comparison of different RF-based DF localization systems.}
    \begin{tabular}{|l|l|l|l||l|} \hline
      &{\bf MIMO Radar-} &{\bf Radio Tomographic} &{\bf Nuzzer} &{\bf Spot} \\
      &{\bf based Systems} & {\bf Imaging (RTI)} &{\bf System} &{\bf System}\\ \hline \hline
        Special hardware required &  Yes   &  Yes   & No  & No \\ \hline
        Number of special nodes&  Few   & Many   & None  & None \\ \hline
        Number of streams & N/A (echo based)  &  Large (756)   & Small (6)  & Small (6)\\ \hline
        Coverage area&  Limited (high freq.) &   Limited & Yes  & Yes\\
                                              \hline
        Computational Complexity &  Low   & High  & Moderate  & Low \\ \hline
        Accuracy &  Very High   &  High   & Moderate  & High \\ \hline
        Muli-path effect& Limited & Yes & Limited (F.print)& Limited (F.print)\\\hline
        Multi-entity tracking&  Yes   &  Yes   & No  & Yes\\ \hline
        Multi-entity overhead&  Low   &  Low   & Intractable  & Moderate (F.print)\\ \hline
    \end{tabular}
    \label{tab:related}
\end{table*}

\section{Related Work}
\label{sec:related}
Device-free tracking systems have been introduced over the year including: radar-based, camera-based, sensors-based, and WLAN-based systems. Table~\ref{tab:related} shows how \textit{Spot} compares to the different systems.

In the radar-based systems, pulses of radio waves are transmitted into the area of interest and based on measuring the received reflections, objects could be tracked. Several technologies have been presented in this class including ultra-wideband (UWB) systems \cite{Yang:UWB}, doppler radar \cite{Lin:Doppler}, and MIMO radar systems \cite{Haimovich:MIMO}.

Camera-based tracking systems are based on analyzing a set of captured images to estimate the current locations of objects of interest \cite{Moeslund:Camera,Krumm:Camera}. The analysis consists of two main processes: background subtraction and temporal correspondence.

Sensor-based systems use especially installed sensor nodes to cover the area of interest. For example, \cite{Patwari:RTI} applies radio tomographic techniques to the readings of a dense array of sensors to obtain accurate DF tracking.

All the technologies above share the requirement of installing special hardware to be able to perform DF tracking, which reduces their scalability in terms of cost and coverage area. In contrast, WLAN DF tracking aims at exploiting the already installed WLAN.
The DF localization in WLANs was first introduced in \cite{Youssef:DFPchallenges} along with feasibility experiments in a controlled environment. Several papers followed the initial vision to provide different techniques for detection and tracking \cite{Moussa:smart,Yang:Performin,Kosba:RASID,Seifeldin:Nuzzer}. However, all these techniques focus on the problem of a \textbf{\emph{single entity}}. Tracking multiple entities, to-date, has been considered an intractable problem due to the exponential increase in the number of training combination required.

\textit{Spot}, on the contrary, is designed to provide accurate and efficient, i.e. linear training complexity, multi-entity DF localization for WLAN environments.

\section{Conclusion}
\label{sec:conclude}
We presented the design, analysis, and implementation of Spot: a system for accurate and efficient multi-entity device-free WLAN localization. \textit{Spot} leverages probabilistic techniques to provide a smooth and consistent environment image. It uses a cross-calibration technique and an energy minimization framework to reduce the calibration overheard to linear in the number of locations, which turns the DF multi-entity tracking to a tractable problem. We showed that the selected energy minimization terms lead to an efficient solution by mapping the energy function to a binary graph-cut problem. We further showed how to perform clustering on the generated environment map to remove outliers and enhance accuracy.

Implementation on standard WiFi hardware in two different testbeds show that \textit{Spot} can achieve 1.1m median distance multi-entity tracking error, which is better than the stat-of-art techniques by at least 36\% in both testbeds for zone-based differences and 15.49\% in average error for the locations-based difference. In addition, it can estimate the  number of entities correctly to within one entity difference 100\% of the time. This highlights the promise of \textit{Spot} for a wide range of multi-entity DF tracking applications.

\normalsize
\bibliographystyle{abbrv}
\bibliography{DFPTracking}
\appendix
\section{Proof of Theorem 2}
The proof is based on showing that the optimization problem solved by the min-cut on the constructed graph is equivalent to the optimization problem in Equation~\ref{eq:e_i}.

\begin{proof}
  For any node $x$ in the constructed graph (Figure~\ref{fig:graph}), if this node is assigned the label $S$, i.e. its value is $\alpha_x^t= 1$, after running the min-cut algorithm, then part of its corresponding contribution in the optimal cost is $P(s^t|\alpha_x^t=1)+ P(\alpha_x^t=1| \alpha_x^{t-1}, \alpha_x^{t-2})$. Similarly, if this node is assigned the label $T$, i.e. its value is $\alpha_x^t= 0$, after running the min-cut algorithm, then part of its corresponding contribution in the optimal cost is $P(s^t|\alpha_x^t=0)+ P(\alpha_x^t=0| \alpha_x^{t-1}, \alpha_x^{t-2})$. Both are equivalent to the unary terms in Equation~\ref{eq:e_i}.

  Now consider any two nodes $x$ and $y$ in the constructed graph (Figure~\ref{fig:graph}), if these two nodes have the same label, no extra terms are contributed to the optimal cost in the minimal cut. However, if $x$ is assigned to $S$ and $y$ is assigned to $T$ or vice versa, an extra term will be added to the optimal cost in the minimal cut, which is equal to $\frac{1 + e^{-\left\|P(s^t|\alpha_{x}^{t})-P(s^t|\alpha_{y}^{t})\right\|^2}}{2}$. This corresponds to the binary term in Equation~\ref{eq:e_i}.

Therefore the cost of the minimal cut in the constructed graph is equivalent to the minimum of the summation of both the unary and binary terms. Therefore the binary graph-cut solution on the constructed graph is a solution to the corresponding energy minimization problem in Equation~\ref{eq:eng_min}.
\end{proof}

\end{document}